\documentclass[journal]{IEEEtran}
%\documentclass[draftcls]{IEEEtran}
%\documentclass[journal,draftcls,onecolumn,12pt,twoside]{IEEEtran}
%\documentclass[journal,draftclsnofoot,onecolumn]{IEEEtran}
%\usepackage{mathrsfs,amsmath,amssymb,epsfig,amsfonts,fancyhdr,graphicx,cite,amsthm}
%\usepackage{algorithm,algorithmic,engord,setspace}
%%%%%%%%%%%%%%%%%%%%%%%%%%
\usepackage{graphics}
\usepackage{graphicx}
\usepackage{epsfig}
\usepackage{amsmath,amssymb}
\usepackage{mathrsfs,amsfonts,fancyhdr}
\usepackage{multirow}
\usepackage{multicol}
\usepackage{cite}
\usepackage{amsthm}
\newtheorem{thm}{Theorem}[]

\theoremstyle{remark}
\newtheorem{rem}{Remark}[]
%%%%%%%%%%%%%%%%%%%%%%%%%%%%%%
\setlength{\abovecaptionskip}{0pt}
\setlength{\intextsep}{0pt}

%\linespread{1.55}
%%%%%%%%%%%%%%%%%%%%%%%%%%%%%%
\begin{document}
\title{Decode-Forward Transmission for the Two-Way Relay Channels}
\author{\normalsize{Ahmad Abu Al Haija, \emph{Student Member, IEEE}, Peng Zhong  and Mai Vu, \emph{Senior Member, IEEE}}
\thanks{ A part of this work was presented at
  the IEEE International Conference on Communications (ICC) $2012$ and
  IEEE Information Theory Workshop (ITW) $2011$. This work was supported in part by grants from
 the Natural Science and Engineering Research Council of Canada (NSERC),
 the Fonds Quebecois de la
Recherche sur la Nature et les Technologies (FQRNT) and
the Office of Naval Research (ONR, Grant N00014-14-1-0645). Any opinions, findings, and conclusions or recommendations expressed in this material are those of the authors and do not necessarily reflect the views of the Office of Naval Research.}
\thanks{Ahmad Abu Al Haija is with the Department of
  Electrical and Computer Engineering, McGill University, Montreal,
  Canada (e-mail: ahmad.abualhaija@mail.mcgill.ca). This work was performed while he visited Tufts University. Peng Zhong was with the ECE department at McGill University when this work was performed.  Mai Vu is with the
  department of Electrical and Computer Engineering at Tufts
  University (email: maivu@ece.tufts.edu).
}}

\maketitle
\vspace{-20mm}

\begin{abstract}
%\boldmath
We propose composite  decode-forward (DF) schemes for the two-way relay channel in both the full- and half-duplex modes by combining coherent relaying, independent relaying and partial relaying strategies.  For the full-duplex mode, the relay partially decodes each user's information in each block and forwards this partial information coherently with the source user to the destination user in the next block as in block Markov coding. In addition, the relay independently broadcasts a binning index of both users' decoded information parts in the next block as in independent network coding. Each technique has a different impact on the relay power usage and the rate region.  We further analyze in detail the independent partial DF scheme and derive in closed-form link regimes  when this scheme achieves a strictly larger rate region than just time-sharing between its constituent techniques, direct transmission and independent DF  relaying, and when it reduces to a simpler scheme.   For the half-duplex mode, we
propose a $6$-phase time-division scheme that incorporates all considered relaying techniques and uses joint decoding simultaneously over all receiving phases. Numerical results show
significant rate gains over existing DF schemes, obtained by performing link adaptation of the composite scheme based on the identified link regimes.
%strictly larger rate region compared to existing DF schemes in both full and half-duplex modes.
%
%Index terms: two-way relay channel, coherent pDF, independent pDF, superposition coding.
\end{abstract}
%
%\begin{IEEEkeywords}
%partial decode-forward, two-way relay channel
%\end{IEEEkeywords}

%----------------------------------------------------------------------
%\vspace{-8mm}
\section{Introduction}
The two-way relay channel (TWRC) has received attention as
one of the basic cooperation models in both wired and wireless
networks. The existence of different routes in a wired network or the
shared-medium nature of a wireless network allows a relay node to help
exchange information between two source nodes. The application is apparent
in infrastructure-less networks such as ad hoc and sensor networks, where
two nodes can communicate with help from a third node in between. In infrastructure-aided networks such as the cellular system, the
newly introduced device-to-device communication mode \cite{DTD} also allows
the application of two-way relay transmission, where the relay could be an idle user or a
nearby femto base station helping the direct communication between two active users. Such a
scenario is appealing in high speed video exchange for example.

Multiple two-way relaying strategies have been proposed for both
full- and half-duplex communications \cite{rankov2006achievable,
  xie2007network, PV_DC, kim2008per, zhong2012partial, ASHARA, nw6p,
  khafagyicc, gerdes}, each strategy
has slightly different rate region  performance from the others. On the one hand, it is of theoretical interest
to find a composite scheme that achieves the highest performance and
includes all known schemes. On the other hand, it is of practical
interest to design the simplest scheme that achieves the maximum
performance. These seemingly contradicting interests require a
thorough understanding of the effect  on performance of each separate relaying
strategy and the condition for this effect to be
maximized. Such a condition may appear in the form of regimes on the network links
 strength. The more complicated scheme is then only
needed under specific link regimes. Mapping out these link regimes is
related to the optimal resource allocation to the constituent
techniques in a composite scheme, but the question here is not to find
the exact optimal resource allocation to all techniques, but to ask
when it is optimal to allocate zero resources to a certain technique
(so that the optimal scheme need not use that technique). This topic
has not been explored much in the literature.
%
% The works in \cite{chenicc, varanasi1, ROFD1, ROFD2, op2p}, for
% example, optimize the resource allocations only for simple DF schemes.
%
%%%%%%%%%%%%%%%%%%%%%%%%%
%\vspace{-10mm}
%\vspace{-5mm}
\subsection{Related works}
Several full-duplex schemes have been proposed for the TWRC,
 differing mainly on the relaying techniques employed at the
 relay. Inspired by the basic techniques for the classical one-way
 relay channel \cite{cover1979capacity}, different decode-and-forward
 and compress-and-forward relaying schemes have been applied to the
 TWRC. In this paper, we focus on the decode-and-forward (DF)
 strategies where the relay fully or partially decodes the information
 of each user and forwards it to the other user in the next
 transmission block. A DF strategy based on block Markov encoding is
 proposed in \cite{rankov2006achievable}, where in each transmission
 block, both users send their new information and also repeat the old
 information of the previous block. Different from the regular one-way relay channel, here each user's old information is sent coherently with only a part of the relay's transmit signal, but not with the whole relay signal, which contains information of both users and is unknown at either user. In another DF strategy
 without block Markovity \cite{xie2007network}, the users do not
 repeat their old messages and the relay forwards only an independent
 codeword as a function of both users' messages without coherent
 transmission with either user. This independent coding strategy
 resembles the idea of network coding at the relay. Each of these two
 strategies has a different impact on the achievable rate and the
 power allocation at the relay; neither strategy always outperforms the
 other one.

 Partial DF (pDF) relaying has not been investigated much in the
 literature, except when it is in combination with another relaying
 strategy such as compress-and-forward \cite{CST1}. This is most
 likely because of the belief that partial DF alone does not bring
 improvement over full DF relaying as in the classical single-antenna one-way  relay channel
 \cite{gmze}. Rather surprisingly, we showed   that for the two-way relay channel, partial DF can
 strictly improve performance over full DF \cite{zhong2012partial}. The improvement comes from
 the fact that signals from multiple users can now interfere with each
 other, hence in certain link regimes, it is better to reduce this
 interference by performing decode-and-forward partially.

 For the half-duplex mode, existing works differ mainly in the number
 of phases, the relaying technique at the relay, and the decoding
 technique at the users. A $4$-phase independent DF is proposed in
 \cite{kim2008per} where the two users alternatively transmits and
 receives during the first two phases, both users transmit in phase
 $3$ and the relay transmits in phase $4$. The most general strategy
 for half-duplex TWRC includes $6$ phases, which has been considered
 in several works including \cite{ASHARA, nw6p}, where in each of the
 additional $2$ phases, one user and the relay independently transmit
 to the other user. The degrees of freedom is analyzed for a $6$-phase
 independent DF scheme with multiple-antenna nodes in
 \cite{khafagyicc}. Another $6$-phase scheme considering both
 independent and coherent pDF relaying is proposed in \cite{gerdes},
 where coherent transmissions occur in phases $5$ and $6$. However,
 the users perform separate decoding at the end of each phase instead
 of simultaneous  decoding over all phases.
%This paper proposes a composite $6$-phase pDF scheme that includes
%both coherent and independent coding  where the users perform  joint
%decoding over all phases.

Although many schemes have been proposed for the TWRC, few works
analyzed them. Most existing analysis is about the optimal resource
allocation for a simple scheme. For example, a $2$-phase half-duplex
DF scheme without the direct links between users is optimized for the
phase durations to maximize the rate region \cite{chenicc} and the
sum rate \cite{op2p}. An OFDMA $2$-phase DF scheme is optimized for
the resource allocations that achieve the largest achievable rate region per
OFDMA subcarrier \cite{ROFD1} or jointly over all multi-carriers
\cite{ROFD2}. Analysis for more complex schemes which are composed of
several techniques is not yet seen in the literature.
%
%%%%%%%%%%%%%%%%%%%%%%%%%%%%
%\vspace{-5mm}
\subsection{Main results and contributions}
In this paper, we investigate a composite scheme for the TWRC based on
decode-and-forward relaying that includes several techniques: coherent
block Markov relaying, independent relaying, and partial relaying. Such a combination of DF techniques has not been considered previously in the literature. We
design and analyze the composite schemes in both the full- and half-duplex modes. The
full-duplex design allows a detailed analysis on the performance impact of
the considered techniques under different link regimes, whereas the
half-duplex adaptation allows application in current wireless
systems.
%%%%%%%%5555
%We aim to analyze the link conditions for the optimal use of each of the three techniques. However, for simplicity, we consider every pair of
%these techniques severalty. As we have analyzed the coherent and independent full DF relaying two techniques in our previous work in
%\cite{lisa2014, lisa2015}, we analyze in this paper the independent relaying and partial relaying techniques, and derive the optimal
%use of these technique or their simplified  sub-schemes such as full DF or direct transmission schemes in order to achieve
%the maximum rate performance. The analyses for the last pair that consists of coherent and partial DF relaying will be considered in future works.

In our proposed schemes, each user splits its message into a common
and a private part, the common message is to be decoded at the relay,
whereas the private message is to be decoded only at the destination. For
full-duplex transmission, each user superposes its private
part over the common part of the current block and the common part of
the previous block. The relay decodes the current common messages of
both users and forward each previous common message coherently with
its source user to the destination user. In addition, the relay encodes a
bin index that is a function of both users' common messages and forwards it
independently with the signals from the users. The relay's signal therefore is a superposition of three parts.  Each user employs sliding
window decoding over two consecutive blocks to decode the other user's
messages. This scheme is a generalization of our previous schemes in
conference publications \cite{PV_DC, zhong2012partial}; it also includes
all existing schemes in \cite{rankov2006achievable, xie2007network} as
special cases and achieves a strictly larger rate region than the
time-shared region between these schemes.

We then analyze in detail the independent coding and partial DF relaying components of our
scheme to understand when each of these constituent
techniques is necessary. Without one or both these techniques, the scheme
reduces to two-way full DF relaying, direct transmission or a hybrid one-way full DF relaying for one user and direct transmission for the other user. We derive conditions on link strengths for when it is sufficient to use one of these  simplified schemes to achieve the maximum achievable rate region and when it is necessary to perform the composite independent partial DF. Such a link-regime identification has not been done before, and provides the necessary insights for applying two-way relaying in practice.
These link regimes also reveal that independent partial relaying is beneficial in the TWRC when the relay link from one user is slightly stronger
than its direct link, quantified by conditions among the received SNRs over these links. Because of space and complexity constraints, link-regime  analysis is performed separately for the independent
and coherent full DF relaying components in \cite{lisa2015}.
%
%
%The analyses show that in most channel settings (link strengths' relations), simple schemes are sufficient to achieve the maximum achievable
%rate region. The composite independent partial DF relaying is required only from user $1$ (user $2$) when the link from user $1$ (user $2$)
%to the relay is slightly stronger than the direct link to user $2$ (user $1$) such that the receive SNR at the relay from user $1$ (user $2$)
%is less than the received SNR at user $2$ (user $1$) from the relay and user $1$ (user $2$).  Under these channel conditions, the composite
%independent partial DF can significantly improves the rate region over simpler schemes. These results are confirmed by numerical simulation.

For the half-duplex mode, we propose an inclusive $6$-phase
time-division scheme. Different from existing $6$-phase schemes
\cite{ASHARA, nw6p, khafagyicc, gerdes}, our scheme utilize all coherent,
independent and partial relaying techniques; furthermore, both users employ
simultaneous decoding over all phases instead of separate decoding in each
phase. Hence, our scheme outperforms all existing schemes.

%%%%%%%%%%%%%%%%%%%%%%%%
%\vspace{-10mm}
%\subsection{Paper outline}
The rest of this paper is organized as follows. Section \ref{sec: CM}
describes the TWRC model for both the full- and half-duplex
modes. Section \ref{fd-scheme} describes the proposed composite pDF
scheme for the full-duplex mode and derives its achievable rate
region. Section \ref{sec: AFDS} analyzes the link regimes for the
independent pDF scheme. Section \ref{sec: HDS} describes an inclusive
$6$-phase pDF scheme for the half-duplex mode. Section \ref{sec: NR}
presents numerical results and Section \ref{sec: CN} concludes the
paper.
%\vspace{-4mm}
%----------------------------------------------------------------------
\section{Two-way relay channel model}\label{sec: CM}

The two-way relay channel consists of two user nodes, each with its
own information to send to the other node, and a relay node that can
receive from and transmit to both user nodes to facilitate
communication between them.
We are interested in the full TWRC with a direct link between the two
user nodes in addition to the links to the relay which captures the full
interaction in a wireless environment.

Next we will present both full and half-duplex channel
models as shown in Figure \ref{fig:6_phases}. Traditionally, wireless
communication devices are half-duplex, meaning that they can only
transmit or only receive on a single frequency band at a time. Recently,
full-duplex wireless has been demonstrated to work even with just a single
transmit and a single receive antenna, bringing the possibility of
functional full-duplex wireless systems in the future. In this paper, we
will analyze both the full and half-duplex modes for the
TWRC. Often the design and properties of a transmission scheme for the
full-duplex mode can be applied to the half-duplex mode. Hence considering
both modes helps us focus on the design  and analysis of transmission schemes for the
full-duplex mode first, then extend these schemes to satisfy the
half-duplex constraint.% of current wireless devices.
%%%%%%%%%%
\begin{figure}[!t]
    \begin{center}
    \includegraphics[width=0.95\textwidth]{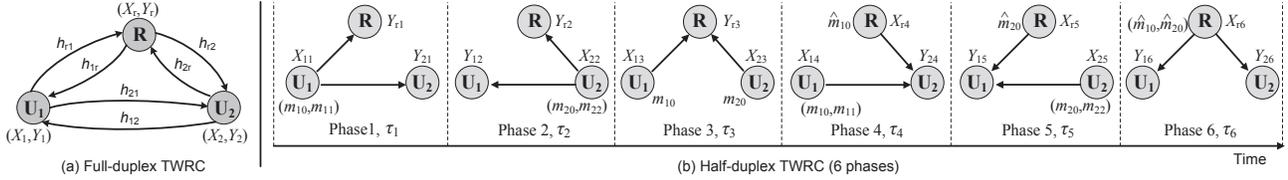}
    \caption{ Full and Half-duplex model for TWRC.}
    \label{fig:6_phases}
    \end{center}
    \vspace{-6mm}
\end{figure}
%\vspace{-5mm}
\subsection{Full-duplex channel model}
%% \begin{figure}[t]
%%     \begin{center}
%%     \includegraphics[width=0.3\textwidth]{GTWRC.pDF}
%%     \caption{Gaussian two-way relay channel model.} \label{fig:GTWRC}
%%     \end{center}
%% \end{figure}

The full-duplex TWRC as shown in Figure \ref{fig:6_phases}(a) can be modeled as
\begin{align}
  Y_1&=h_{12}X_2+h_{1r}X_r+Z_1, \nonumber \\
  Y_2&=h_{21}X_1+h_{2r}X_r+Z_2, \nonumber \\
  Y_r&=h_{r1}X_1+h_{r2}X_2+Z_r,
\label{GTWRC}
\end{align}
where $Z_1, Z_2, Z_r\sim\mathcal{CN}(0,1)$ are independent complex Gaussian
noises with zero mean and unit variance, and $(X_1,Y_1)$, $(X_2,Y_2)$,
$(X_r,Y_r)$ are pairs of the transmitted signal and received signal at user
$1$ $(\mathbf{U}_1)$, user $2$ $(\mathbf{U}_2)$ and the relay,
respectively. The average input power
constraints for $\mathbf{U}_1$, $\mathbf{U}_2$ and the relay are assumed
to be $P_1,$ $P_2$ and $P_r$, respectively.

The channel gain coefficients $h_{12}, h_{1r}, h_{21}, h_{2r}, h_{r1},
h_{r2}$ are complex value. We use the standard assumption in coherent relaying literature \cite{cover1979capacity, gmze, gamal2010lecture}
  that the phases of these
channel coefficients are known at the respective transmitters so that
coherent transmission from two different transmitters are possible,
and the full channel coefficients are known at the respective
receivers. As such, the achievable rate will depend only on the
{\it amplitude} of the channel coefficients, which we will denote as
$g$ with the same subscripts as $h$.
%%%%%%%%%%%%%
%\vspace{-5mm}
\subsection{Half-duplex channel model}
Time division is used to adapt a full-duplex transmission scheme to
the half-duplex mode. The main idea is that each transmission block,
(or equivalently a time slot or a frame) is divided into several
phases. Depending on the number of phases, different half-duplex
protocols can result. Existing literature has a number of variation with
$2$, $3$, $4$ or $6$ phases \cite{kim2008per, zhong2012partial, ASHARA,
  nw6p, khafagyicc, gerdes}.

In this paper, we will consider the most comprehensive and inclusive
$6$-phase protocol. The $6$-phase protocol encompasses all other
half-duplex protocols as special cases, which can be obtained by setting
the duration of some of the $6$ phases to zero. The channel model for each
phase, as depicted in Figure \ref{fig:6_phases}(b), can be expressed as
\begin{align}
  %\label{6_signaling}
 \!\!\!\!\text{Phase 1:} \; Y_{21}&=h_{21}X_{11}+Z_{21},\;
   Y_{r1}=h_{r1}X_{11}+Z_{r1},\nonumber\\
  \!\!\!\!\text{Phase 2:} \; Y_{12}&=h_{12}X_{22}+Z_{12},\;Y_{r2}=h_{r2}X_{22}+Z_{r2},\nonumber
\end{align}
\\ \\ \\ \\ \\ \\ \\
\begin{align}
  \label{6_signaling}
  \!\!\!\!\text{Phase 3:} \; Y_{r3}&=h_{r1}X_{13}+h_{r2}X_{23}+Z_{r3},\nonumber\\
  \!\!\!\!\text{Phase 4:} \; Y_{24}&=h_{21}X_{14}+h_{2r}X_{r4}+Z_{24},\nonumber  \\
  \!\!\!\!\text{Phase 5:} \; Y_{15}&=h_{12}X_{25}+h_{1r}X_{r5}+Z_{15},\nonumber\\
  \!\!\!\!\text{Phase 6:} \; Y_{16}&=h_{1r}X_{r6}+Z_{16},\; Y_{26}=h_{2r}X_{r6}+Z_{26},
\end{align}
where $X_{ij}$ and $Y_{ij}$, ($i \in [1,2,r], j \in [1,\ldots,6]$)
respectively represents the transmitted and received signal of user
$i$ during phase $j$. All the noises $Z_{ij}$ are i.i.d $\sim\mathcal{CN}(0,1).$  The channel
coefficients are similarly defined as in the full-duplex model.
%%%%%%%%5
%\vspace{-4mm}
%----------------------------------------------------------------------
\section{A full-duplex decode-forward transmission scheme}
\label{fd-scheme}
The unique feature in
designing a DF scheme for the TWRC arises from the fact that each user does not have
the complete information decoded at the relay, as was the case in the
one-way relay channel. There are two different ways to build a DF
scheme for the TWRC, depending on whether the source performs block Markov encoding on only a part of the relay's signal,
or performs coding independently with the relay. Further, partial DF has not been previously considered for the TWRC but
can strictly outperform full DF, as shown later in this paper.
%%%%%%%%%%%%%%%%%%%%%%%%%%%%%%%%%%%%%%%%%%%%%%%%%%%%%%%%%%%%%
%In this paper,
Next, we propose a new composite pDF scheme for the TWRC  that combines all three strategies: partial DF, coherent block
Markov coding and independent coding.
%%%%%%%%%%%%%%%%%%%%%%%%%%%%
\vspace{-4mm}
\subsection{Transmission scheme}
Each user node splits its message
into two parts: a common message and a private message. The user first
encodes the common message by superposing this message on the common
message of the previous block as in block Markov coding;
it then superposes the private message on the current common
message. The relay only decodes the current common message of each user; it
then performs random binning of the pair of decoded common messages and
generates a codeword for this bin index, superposes this codeword on the
two codewords for two common messages, effectively combining relay
strategies of both \cite{xie2007network} and
\cite{rankov2006achievable}. The relay forwards this codeword containing
the two decoded common messages and their bin index in the next block. Each
user decodes the other user's message by sliding window decoding based on
received signals in both the current and previous blocks.

\subsubsection{Transmit signal design}
Denote the new messages user 1 and user 2 send in block $i$ as
$m_{1,i}$ and $m_{2,i}$, respectively. Each user splits its own
message as $m_{1,i} = (m_{10,i},m_{11,i})$ and $m_{2,i} =
(m_{20,i},m_{22,i})$, where the first part is the common message and
the second is the private message. The relay partitions the set
of all common messages $\{m_{10,i-1},m_{20,i-1}\}$ equally and
uniformly into a number of bins and indexes these bins by
$\{l_{i-1}\}$. The users and the relay then construct the transmit
signals in block $i$ as follows:
\begin{align}
  X_1&=\sqrt{\alpha_1}W_1(m_{10,i-1})+\sqrt{\beta_1}U_1(m_{10,i})
  +\sqrt{\gamma_1}Q_1(m_{11,i}) \nonumber\\
  X_2&=\sqrt{\alpha_2}W_2(m_{20,i-1})+\sqrt{\beta_2}U_2(m_{20,i})
  +\sqrt{\gamma_2}Q_2(m_{22,i}) \nonumber\\
 X_r&=\sqrt{k_1 \alpha_1}W_1(m_{10,i-1})+
  \sqrt{k_2 \alpha_2}W_2(m_{20,i-1}) \nonumber\\
  &\;\;+\sqrt{\beta_3}U_r(l_{i-1})
  \label{tx-signals}
\end{align}
where $W_1, W_2, U_1, U_2, U_r, Q_1, Q_2$ are independent Gaussian
signals with zero mean and unit variance that encode the respective
messages and bin index. Power allocation factors $\alpha_1,
\alpha_2, \beta_1, \beta_2, \beta_3,$ $\gamma_1, \gamma_2$ are
non-negative and satisfy
\begin{align}
   \alpha_1+\beta_1 +\gamma_1&\leq P_1, \nonumber\\
  \alpha_2+\beta_2 +\gamma_2 &\leq P_2, \nonumber \\
  k_1 \alpha_1+k_2 \alpha_2+\beta_3 &\leq P_r,
  \label{power-constraint}
\end{align}
where $k_1, k_2$ are scaling factors that relate the power allocated
to transmit the same common message at each user and the relay.

We note several features of the signal construction in
\eqref{tx-signals}. In each transmission block, each user sends
not only new messages of that block (with index $i$), but also repeats
the common message of the previous block (with index $i-1$).  The relay
always transmits information of the previous block which it decodes
only at the end of the previous block. Thus the block Markov structure creates a
coherency between the common signal parts transmitted from each
user and the relay (denoted by $W_1$ and $W_2$) which results in a
beamforming gain.

The relay not
only forwards the signals $W_1$ and $W_2$
%which separately  encode the common
%messages,
but also creates a new signal $U_r$ that independently
encodes both common messages via binning. The random binning is a function of both messages
decoded at the relay and resembles network coding.

Each of these two relaying strategies has a unique implication. Markov
block coding brings a coherent beamforming gain between a source and
the relay, but requires the relay to split its power between $W_1$ and
$W_2$ for each pair of common messages. Independent network coding via
binning, on the other hand, let the relay use its whole power to send
the bin index of the message pair. This bin index can then solely represent
one common message when decoding at each user because of side information
on the other common message available at this user, as can be seen in the
decoding discussion below.

\subsubsection{Decoding rules}
Decoding occurs at all nodes, including the relay and the users. We
assume all messages are equally likely and all nodes use maximum
likelihood (ML) sequence detection, which in this case is equivalent
to the optimal decoding rule presented next.\footnote{In deriving the
  rate region in the Appendix, however, we use joint typicality
  decoding, which is simpler to analyze than ML decoding and achieves
  the same rate region.}

At the relay, decoding is quite simple and is similar to decoding in the
multiple access channel. The received signal in block $i$ at the relay is
\begin{align}
  Y_r&=h_{r1} \left( \sqrt{\alpha_1}W_1+\sqrt{\beta_1}U_1
    +\sqrt{\gamma_1}Q_1 \right)\\
  &\;\;+ h_{r2}\left( \sqrt{\alpha_2}W_2+\sqrt{\beta_2}U_2
    +\sqrt{\gamma_2}Q_2 \right) + Z_r.\nonumber
\end{align}
In block $i$, the relay already knows $W_1,W_2$ (which carry
$m_{10,i-1},m_{20,i-1}$) and is interested in decoding ${U}_1$ and
${U}_2$ (which carry $m_{10,i},m_{20,i}$). For the purpose of
highlighting what is known, we write the optimal maximum a posteriori
probability (MAP) decoding rule, which can be converted to ML decoding
rule, as
\begin{align}
  \left( m_{10,i}, m_{20,i} \right) = \arg \max P(U_{1,i},U_{2,i}|Y_{r,i},W_{1,i},W_{2,i}).
\end{align}
In this decoding, the relay treats signals ($Q_1,Q_2$) that carry the private information as
noise.

Decoding at each user node is based on signals received in two
consecutive blocks. To decode new information sent in block $i$, a
user will use received signals in both blocks $i$ and $i+1$, which
results in a one-block decoding delay. Take user 2 for example, this user's
received signal in each block is
\begin{align}
 Y_2 &= h_{21} \left( \sqrt{\alpha_1}W_1+\sqrt{\beta_1}U_1
    +\sqrt{\gamma_1}Q_1 \right)\\
  &\;\;+ h_{2r} \left( \sqrt{k_1 \alpha_1}W_1+
    \sqrt{k_2 \alpha_2}W_2 +\sqrt{\beta_3}U_r\right)
  + Z_2.\nonumber
\end{align}
User 2 knows its own signal $W_2$ and can directly subtract it from the
received signal.

At the end block $i$, each user has presumably decoded all previous blocks
information correctly. That means in block $i$, user 2 knows $W_1,U_r$
(which encode $w_{10,i-1}$) and thus only needs to decode $U_1, Q_1$
(which encode $w_{10,i}, w_{11,i}$). In block $i+1$, the relevant signals
carrying the same wanted information are $W_1$ and $U_r$ (which encode
$w_{10,i}$). Note that even though $U_r$ encodes the bin index of
the pair of common messages and hence carries information about both
common messages, but because one common message is originated and thus is
known at each user, $U_r$ can be used at each user solely to decode the other
common message. This is the effect of side information in
decoding at each user in the TWRC, which has also been noted in
\cite{kim2008per}. For block $i+1$, user 2 will decode $W_1$ and $U_r$ while
treating the other signals $U_1, Q_1$ as noise.

We write this joint decoding over two blocks using MAP decoding at user 2 as follows:
\begin{align}
 \left( m_{10,i}, m_{11,i} \right) = \arg \max \big(
  P(U_{1,i},Q_{1,i}|Y_{2,i},W_{2,i},W_{1,i},U_{r,i})\nonumber\\ \times
  P(W_{1,i+1},U_{r,i+1}|Y_{2,i+1},W_{2,i+1},m_{20,i}\big).
  %\label{user-dec}
\end{align}
Note that the above decoding rule applies across two blocks simultaneously, such that the decoded message
pair maximizes {\it the product} of the decoding probabilities in these
blocks.
\vspace{-3mm}
\subsection{Achievable rate region}
With above scheme, we get the
following achievable rate region.
\begin{thm}
\label{thm:complete_pDF_gaussian}
Using the proposed partial decode-forward based scheme, the following
rate region is achievable for the Gaussian two-way relay channel:
\begin{align}
  R_1 &\leq \min\{I_1+I_4, I_{5}\},\nonumber\\
  R_2 &\leq \min\{I_2+I_{6}, I_{7}\},\nonumber\\
  R_1+R_2 &\leq I_3+I_4+I_{6},\;\;\text{where}
  \label{rates-fd}
\end{align}
%where
\begin{align}  \label{rates-constr}
  I_1&=C\left(\frac{g_{r1}^2 \beta_1}
    {g_{r1}^2\gamma_1 +g_{r2}^2\gamma_2 +1}\right), \quad
  I_4=C(g_{21}^2\gamma_1),\\
  I_2&=C\left(\frac{g_{r2}^2 \beta_2 }
    {g_{r1}^2\gamma_1+g_{r2}^2\gamma_2+1}\right), \quad
  I_{6}=C(g_{12}^2\gamma_2),\nonumber\\
  I_3&=C\left(\frac{g_{r1}^2 \beta_1 +g_{r2}^2 \beta_2 }
    {g_{r1}^2\gamma_1+g_{r2}^2\gamma_2+1}\right),\nonumber\\
  I_{5}&=C\left( \alpha_1 \left(g_{21}  + g_{2r} \sqrt{k_1}\right)^2
    + g_{21}^2 (\beta_1 + \gamma_1)  +  g_{2r}^2   \beta_3  \right)\nonumber\\
    %\end{align}
    %\begin{align}
  I_{7}&=C\left( \alpha_2 \left(g_{12} + g_{1r} \sqrt{k_2}\right)^2
    + g_{12}^2 (\beta_2 + \gamma_2) +  g_{1r}^2   \beta_3  \right), \nonumber
\end{align}
with $C(x)= \log (x+1)$, $g_{*} = |h_{*}|$ as amplitudes of $h_{*}$, and
power factors satisfying \eqref{power-constraint}.
\end{thm}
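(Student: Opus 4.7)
The plan is to use the standard random-coding framework with block-Markov encoding, random binning at the relay, and sliding-window joint-typicality decoding at the destinations, as the footnote suggests. First I would generate, for each block, independent i.i.d.\ $\mathcal{CN}(0,1)$ codebooks for $W_1,W_2,U_1,U_2,Q_1,Q_2$ of sizes $2^{nR_{10}},2^{nR_{20}},2^{nR_{10}},2^{nR_{20}},2^{nR_{11}},2^{nR_{22}}$ respectively, partition the pair set $[1{:}2^{nR_{10}}]\times[1{:}2^{nR_{20}}]$ uniformly at random into $2^{nR_l}$ bins, and generate one $\mathcal{CN}(0,1)$ codeword per bin for $U_r$. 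Transmission over $B$ blocks follows \eqref{tx-signals} and carries $B-1$ message pairs with a vanishing rate penalty.

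Relay decoding is the easy part. At the end of block $i$ the relay subtracts the known $W_1(m_{10,i-1}),W_2(m_{20,i-1})$, leaving a Gaussian MAC in $(U_1,U_2)$ with $h_{r1}\sqrt{\gamma_1}Q_1+h_{r2}\sqrt{\gamma_2}Q_2+Z_r$ as Gaussian ``noise.'' Joint typicality and the packing lemma immediately yield $R_{10}\le I_1$, $R_{20}\le I_2$, and $R_{10}+R_{20}\le I_3$.

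Destination decoding is the delicate step. At the end of block $i+1$, user $2$ searches for the unique $(m_{10,i},m_{11,i})$ such that (i) $Y_{2,i}$ is jointly typical with $U_1(m_{10,i}),Q_1(m_{11,i})$ together with the already-decoded $W_1(m_{10,i-1}),W_2(m_{20,i-1}),U_r(l_{i-1})$, and (ii) $Y_{2,i+1}$ is jointly typical with $W_1(m_{10,i}),U_r(l_i)$ together with the user's own $W_2(m_{20,i})$, where $l_i$ is the bin of the pair $(m_{10,i},m_{20,i})$. I would enumerate error events by which of $m_{10,i},m_{11,i}$ is wrong: when only $m_{11,i}$ is wrong, block $i$ alone gives $R_{11}\le I_4$; when $m_{10,i}$ is wrong (and, generically, so is $l_i$), multiplying the two-block typicality probabilities and exploiting the coherent gain $\alpha_1(g_{21}+g_{2r}\sqrt{k_1})^2$ together with the independent $U_r$ contribution $g_{2r}^2\beta_3$ telescopes to $R_{10}+R_{11}\le I_5$. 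A symmetric analysis at user $1$ produces $R_{22}\le I_6$ and $R_{20}+R_{22}\le I_7$. Fourier-Motzkin elimination of $R_{10},R_{11},R_{20},R_{22},R_l$ under $R_1=R_{10}+R_{11}$ and $R_2=R_{20}+R_{22}$ then collapses these bounds into the three inequalities of \eqref{rates-fd}; in particular, the sum constraint $R_1+R_2\le I_3+I_4+I_6$ follows from the MAC sum bound $R_{10}+R_{20}\le I_3$ plus the two private-decoding bounds $R_{11}\le I_4$ and $R_{22}\le I_6$.

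The principal obstacle is the bookkeeping in the sliding-window analysis combined with random binning. A wrong $m_{10,i}$ changes both $W_1(m_{10,i})$, which is coherently combined with the relay's copy, and (with probability tending to one) the bin $l_i$ and hence $U_r(l_i)$; these two effects must combine exactly to the single closed-form $I_5$ rather than a looser sum. Verifying that the auxiliary ``same-bin, wrong-$m_{10,i}$'' error event does not impose a stricter constraint once $R_l$ is chosen to saturate the MAC sum bound $I_3$, so that $R_l$ is inactive in the FME, is where most of the delicate work lies.
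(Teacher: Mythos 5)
Your proposal follows essentially the same route as the paper's Appendix A: block-Markov superposition codebooks with random binning at the relay, MAC-style joint-typicality decoding of the common messages at the relay (giving $I_1,I_2,I_3$), two-block sliding-window joint decoding at each user (giving $I_4,I_5$ and symmetrically $I_6,I_7$, with the same telescoping of the two per-block mutual-information terms into the single $C(\cdot)$ expression), and Fourier--Motzkin elimination yielding exactly the three constraints of \eqref{rates-fd}. Your added care about the same-bin error event and choosing the bin rate large enough to keep it inactive is a sound way to handle a detail the paper leaves to the standard error analysis, so the argument is correct and matches the paper's proof.
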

\begin{IEEEproof}
  To sketch the main ideas, rate constraints $I_1,I_2,I_3$ come directly
  from decoding the common messages at the relay as in a multiple access
  channel. Constraints $I_4$ and $I_6$ come from decoding the private
  messages at each user, assuming the common messages have been decoded
  correctly as at the relay. Constraints $I_5$ and $I_7$ come from the
  joint decoding of both common and private messages at a user. As such,
  the transmission rate of each user (which is the sum of the common and
  private message rates) is constrained by the smaller of the two rates
  achievable by decoding at the other user only, and across the relay and
  the other user. A rigorous proof based on information theoretic analysis
  is available in Appendix A.
\end{IEEEproof}
The expressions $I_5$ and $I_7$ carry several important features that are
direct results of the signal structure at the relay. Recall that the
signal at the relay in \eqref{tx-signals} contains two parts: the block
Markov parts in $W_1,W_2$ and the independent part in $U_r$. The block
Markov parts contribute to a coherent beamforming gain in $I_5$ and
$I_7$. The power allocated to the beamforming signals is separate in
$k_1\alpha_1$ and $k_2\alpha_2$, where each part contributes to the
achievable rate of only one user (either $R_1$ or $R_2$). On the other
hand, the independent signal part has power $\beta_3$ that contributes
to the rates of both users (both $R_1$ and $R_2$). Hence there is an interesting balance in
allocating power to each signal part at the relay as to maximize the
rate region, where beamforming has higher rate but requires a power split
at the relay, and independent signaling has lower rate but can use the
whole power to increase the transmission rate of each user.
%%%%%%%%%%%%%%%%%%%%%%%%%%%%%%%%%%%%%%%%%%%%%%
%\subsubsection{Inclusion of existing schemes}\label{FDincl}

By adjusting the power allocation, the proposed scheme encompasses and
improves upon all previous DF-based schemes \cite{rankov2006achievable, xie2007network}.  Specifically, the proposed scheme reduces to the Markov (coherent) DF scheme in \cite{rankov2006achievable} by setting $\gamma_1=\gamma_2=\beta_3=0$; to the independent DF scheme in \cite{xie2007network} by setting $\alpha_1=\alpha_2=\gamma_1=\gamma_2=0$; and to the DT scheme by setting $\alpha_1=\alpha_2=\beta_1=\beta_2=\beta_3=0$.
Moreover, by setting $\gamma_1=\alpha_2=\beta_2=0$, the proposed scheme reduces to the hybrid DF and DT scheme where user $1$ performs coherent and independent DF relaying with the relay while user $2$ performs DT. By further setting $\beta_3=0$ ($\alpha_1=0)$,
 the scheme has user $1$ performs only coherent (independent) DF relaying with the relay. By setting $\gamma_2=\alpha_1=\beta_1=0$,
 we have the opposite of the previous case where user $1$ performs DT and user $2$ performs DF relaying.

As shown later in both analysis and simulation, however, the proposed scheme
performance is not a mere linear combination of the existing schemes
performance. Our scheme can outperform all existing schemes and
their combination by achieving rate points outside the time-sharing region
of existing schemes.
%%%%%%%%%%%%%%%%%%%%%%%%%%%%%%%%%%%%%%%%%%%%%%%%%%%%%%%%%%%%%%%%%%%%%%%%%%%%%%%%%
%%%%%%%%%%%%%%%%%%%%%%%%%%%%%%%%%%%%%%%%%%%%%%%%%%%%%%%%%%%%%%%%%%%%%%%%%%%%%%%%%%%%%%%%%%%%%%
%----------------------------------------------------------------------
\section{Analysis of the full-duplex scheme}\label{sec: AFDS}
We have proposed a composite DF scheme for the TWRC that utilizes several
techniques: coherent block Markov coding, independent coding and partial
forwarding. Each technique is represented by a signal part in the
proposed transmit signal structure at each node. When a technique is
not needed, its associated signal part will be allocated zero
power. The necessity of each of these techniques depends on the
network configuration, in particular the relative strength of
different links in the network. For some link regimes, the proposed
scheme can reduce to a simpler scheme without using certain techniques
and still achieve the maximum rates, while for other link regimes, all
techniques are necessary in order to achieve the largest rate
region. It is of important practical value to be able to map out these
link regimes precisely and understand which techniques are required in
each regime.
These link regimes can serve as a basis for performing link adaptation of the composite scheme, depending on the available channel
state information. Specifically, the transmitters only need to know certain relative order of the link strengths in order to
determine what combination of techniques is optimal.
%%%%%%%%%%%%%%%%%%%%%%%%%5
%%%%%%%%%%%%%%%%%%%%%%%%%%%%%%%%%%%%%%%%%%%%%%%%%%%%%%%%%%%%%%%%
%%%%%%%%%%%%%%%%%%%%%%%%%%%%%%%%%%%%%%%%%%%%%%%%%%%%%%%%%%%%%%%%%%%%%%%%%%%%%%%

Mapping out these link regimes analytically is challenging. It is mainly because the associated
rate region optimization problem contains multiple variables and,
in general, is non-convex. In order to simplify the analysis, in this paper, we focus only on a subset of the considered techniques including
%. In \cite{lisa2014}, we analyze
%the full DF relaying scheme with both independent and block Markov coding. In this
%paper, we will analyze
the partial DF relaying and independent coding. Analyzing block Markov coding with
independent coding is done separately in \cite{lisa2014, lisa2015}, and the synthesis of all three techniques is left as a future work.
%and partial DF relaying scheme
%is left as a future work.
%
%%%%%%%%%%%%%%%%%%%%%%%%%%%%%%%%%%%%%%%%%%%%%%%%%%%%%%%%%%%%%%%%%%%
\subsection{Link regimes of partial decode-forward with independent relay coding}
For this analysis, we consider only  independent coding and
partial relaying. Each user splits its message into two parts, the
relay decodes only one part -- the common message -- from each user,
then puts the decoded message pair into a corresponding bin and
forwards the bin index in the next block. The relay effectively performs
network coding where the signal from the relay contains the common messages of both users but is independent of signals
from both users. While there is no coherent beamforming gain between the
relay and each user, the total relay transmit power is {\it effectively}
allocated to carry the common message of only one user, since the
other common message is already known at the decoding user.

This independent pDF scheme can be viewed as a combination of DF
relaying (of the common message parts) and direct transmission (of the
private message parts). For the standard one-way Gaussian relay channel with single antenna at each node,
DF relaying is known to be  more useful than direct transmission only if the source-relay
link is stronger than the source-destination link. Furthermore, pDF
performs no better than DF relaying \cite{gmze}. The main reason is the
power split required to transmit two different information parts at
the source. If the source-relay link is stronger than the
direct link, then it is more beneficial to
put all power into the signal part to be decoded via the source-relay
link, practically leading to no message splitting nor partial DF.
%Recall that for the classical one-way relay
%channel, partial DF always reduces to either full DF or direct
%transmission, depending on the channel condition.

For the TWRC, is
partial DF any better than full DF, and is it more than just a mere
linear (time-shared) combination of full DF and direct transmission?
We find that the answers to these questions are both positive. Indeed
independent pDF can outperform full DF and achieve rate points
strictly outside the DF-DT time-shared region under certain link conditions.

To obtain this result,
%we need to establish these specific link
%regimes for partial DF to outperform both full DF and direct
%transmission. This goal can be accomplished by analyzing
we analyze the rate
region in \eqref{rates-constr} directly, after setting
$\alpha_1=\alpha_2=0$ (no block Markov coding). The scheme reduces to full
DF when the optimal solution is $\gamma_1^\star = \gamma_2^\star = 0$, and
to direct transmission (DT) when $\beta_1^\star = \beta_2^\star=0$. When neither case holds, partial DF
relaying is used. The results are summarized next.
%%%%%%%%%%%%%%%%%%%%%%
\begin{thm}
\label{thm:4_cases}
The independent partial DF (pDF) scheme can be optimally divided into
several operating regimes depending on the link states as follows.
\begin{enumerate}
\item[A)] One user performs pDF and the other user
  switches between  DT and full DF when
  \vspace{-2mm}
  \begin{subequations}\label{case1}
  \begin{align}
    g_{21}^2&<g_{r1}^2< \left(g_{21}^2+g_{2r}^2\frac{P_r}{P_1}\right)(1+g_{r2}^2P_2), \nonumber\\
    & \text{and} \quad g_{r2}^2 < g_{12}^2,
    \label{case1:a} \\
    \text{or} \quad g_{12}^2&< g_{r2}^2<\left(g_{12}^2+g_{1r}^2\frac{P_r}{P_2}\right)(1+g_{r1}^2P_1), \nonumber\\
    & \text{and} \quad g_{r1}^2 < g_{21}^2.
    \label{case1:b}
  \end{align}
  \end{subequations}
  This regime further contains two sub-regimes. The sub-regime in which the pDF scheme achieves the time-shared region between DF and DT
  is specified by
  %In this regime, the pDF scheme outperforms both
  %independent DF and direct transmission by achieving rate pairs
  %outside the time-shared region of these two schemes except when
  %
  \begin{subequations}\label{case2}
      \begin{align}
        \!\!\!\!\!\!\!\!\!\!\!\!g_{21}^2&<g_{r1}^2<\min\left\{\left(g_{21}^2+g_{2r}^2\frac{P_r}{P_1}\right),\;g_{21}^2(1+g_{r2}^2 P_2)\right\},\;
        \nonumber\\
        &g_{r2}^2 < g_{12}^2,\;
        \text{and}\;R_{S,1}\leq R_{TS,1}\label{case2:b}\\
        %\label{case2:c}\\
        \!\!\!\!\!\!\!\!\!\!\!\!\text{or}\; g_{12}^2&<g_{r2}^2<\min\left\{\left(g_{12}^2+g_{1r}^2\frac{P_r}{P_2}\right),\;g_{12}^2(1+g_{r1}^2 P_1)\right\},\nonumber\\
        & g_{r1}^2 < g_{21}^2,\;
        \text{and}\;R_{S,2}\leq R_{TS,2}\label{case2:e}
        %\label{case2:d}
      \end{align}
    \end{subequations}
    where $R_{S,i}$ and $R_{TS,i}$ (for $i\in{1,2}$) are the maximum sum rates
  of the pDF scheme and DF-DT time-shared scheme, respectively, as given in
  (\ref{ffft}). Outside this sub-regime, the pDF scheme outperforms both independent DF and DT by
achieving rate pairs strictly outside the time-shared region of these two schemes.
\item[B)] One user performs full DF and the other user
  performs DT when %This regime  is specified by
    \begin{subequations}\label{case12}
    \begin{align}
      g_{r1}^2&> \left(g_{21}^2+g_{2r}^2\frac{P_r}{P_1}\right)(1+g_{r2}^2P_2), \nonumber\\
        & \text{and} \quad g_{r2}^2 < g_{12}^2, \label{case12:c}\\
      %\label{case2:a}\\
      \text{or}\quad g_{r2}^2&>\left(g_{12}^2+g_{1r}^2\frac{P_r}{P_2}\right)(1+g_{r1}^2P_1),\nonumber\\
        & \text{and} \quad g_{r1}^2 < g_{21}^2.\label{case12:d}
      %\label{case2:b}
    \end{align}
  \end{subequations}
  Here pDF always achieves rate pairs strictly outside the DF-DT time-shared region.
\item[C)] Both users perform partial DF relaying when
  %This regime  is specified by
  \begin{align}
    g_{r1}^2 &>  g_{21}^2, \quad
    g_{r2}^2 >  g_{12}^2 \nonumber\\
    \text{and}& \quad
    %g_{r1}^2+g_{r2}^2<g_{21}^2+g_{12}^2+g_{21}^2g_{12}^2P.
    C(g_{r1}^2P_1+g_{r2}^2P_2) < C(g_{21}^2P_1)+C(g_{12}^2P_2)
    \label{case5}
  \end{align}
  % The link conditions (\ref{case5}) correspond  to region $C$ in Figure \ref{figoprg}.
In this regime, the
  pDF scheme achieves exactly the DF-DT time-shared region.
\item[D)] Both users perform full DF relaying when  %This regime  is specified by
  \begin{align}
    g_{r1}^2 &>  g_{21}^2, \quad
    g_{r2}^2 >  g_{12}^2 \nonumber\\
     \text{and} &\quad
    % g_{r1}^2+g_{r2}^2\geq g_{21}^2+g_{12}^2+g_{21}^2g_{12}^2P.
    C(g_{r1}^2P_1+g_{r2}^2P_2) \geq C(g_{21}^2P_1)+C(g_{12}^2P_2)
    \label{case3}
  \end{align}
  % The link conditions (\ref{case3}) correspond  to region $D$ in Figure \ref{figoprg}.
Here pDF
  reduces to full DF relaying, which is strictly better than DT.
  %%%%%%%%%%%%%%%%%%
\item[E)] Both users perform DT without using the
  relay when  %This regime  is specified by
  \begin{align}
    g_{r1}^2 \leq g_{21}^2, \quad  \text{and} \quad
    g_{r2}^2 \leq g_{12}^2.
    \label{case4}
  \end{align}
  % The link conditions (\ref{case4}) correspond  to region $A$ in Figure \ref{figoprg}.
 Here pDF reduces to DT which
  is strictly better than full DF. Neither user uses the relay.
\end{enumerate}
\end{thm}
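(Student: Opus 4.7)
The plan is to start from the rate region of Theorem 1 with $\alpha_1 = \alpha_2 = 0$, which eliminates the block Markov coherent components and leaves a scheme based only on independent relay coding and partial DF. Under this reduction the rate region becomes a function of five non-negative power parameters $(\beta_1,\gamma_1,\beta_2,\gamma_2,\beta_3)$ satisfying $\beta_1+\gamma_1\le P_1$, $\beta_2+\gamma_2\le P_2$, $\beta_3\le P_r$, while the joint-decoding bounds $I_5$ and $I_7$ simplify to functions of the sums $\beta_i+\gamma_i$ and $\beta_3$ only. I would then maximize a weighted sum rate over these five parameters and read off from the KKT conditions when an optimal allocation has $\gamma_i^\star = 0$ (user $i$ reduces to full DF), $\beta_i^\star = 0$ (user $i$ reduces to DT), or both strictly positive (genuine partial DF).

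The case analysis proceeds regime by regime. Regime E is the cleanest: if $g_{r1}^2 \le g_{21}^2$, then shifting any mass from $\beta_1$ to $\gamma_1$ strictly increases $I_4$ while weakly decreasing the relay-decoding rates $I_1,I_3$, which forces $\beta_1^\star = 0$; the symmetric argument handles user 2, after which $\beta_3^\star = 0$ follows since the relay has no common message to bin. Regime D is handled from the opposite direction: when both relay links dominate \emph{and} the relay MAC sum rate $C(g_{r1}^2 P_1 + g_{r2}^2 P_2)$ already matches or exceeds the direct sum rate $C(g_{21}^2 P_1) + C(g_{12}^2 P_2)$, the sum-rate bound $R_1+R_2 \le I_3+I_4+I_6$ is slack at $\gamma_1=\gamma_2=0$, so $\beta_i^\star = P_i$, $\beta_3^\star = P_r$ is optimal. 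Regime C sits between these two: the relay links still dominate but the relay MAC is now the binding constraint, so positive $\gamma_1,\gamma_2$ appear; because they appear precisely to satisfy the MAC bound with equality, the resulting rate pair lies on a line that coincides with the DF--DT time-share chord.

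For the asymmetric Regimes A and B, one user's relay link dominates its direct link and the other user's does not. The plan is to fix the weaker side at its DT optimum ($\beta_j^\star = 0$, $\gamma_j^\star = P_j$) and analyze the stronger side's subproblem. The threshold $g_{r1}^2 = (g_{21}^2 + g_{2r}^2 P_r/P_1)(1 + g_{r2}^2 P_2)$ arises by equating the marginal rate gain of $\beta_1$ when the active bound on $R_1$ is $I_5$ (so that $\beta_3 = P_r$ contributes fully through the user-2 side information) against the marginal gain when the active bound is $I_1 + I_4$; above this threshold, user 1 should use full DF yielding Regime B, while below it user 1 should partial-DF and Regime A results. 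The sub-regime inside Regime A where pDF exactly reproduces the time-shared region is then pinned down by the comparison $R_{S,1} \le R_{TS,1}$ at matching individual rates; in the complementary sub-regime I would exhibit an explicit $(\beta_1,\gamma_1,\beta_3)$ whose achievable sum rate strictly exceeds $R_{TS,1}$, proving that pDF escapes the time-shared boundary.

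The main obstacle is the non-convexity introduced by the $\min\{I_1+I_4,\,I_5\}$ and $\min\{I_2+I_6,\,I_7\}$ expressions, compounded by the fact that $\beta_3$ couples $I_5$ and $I_7$ and therefore links the two users' marginal trade-offs. The combinatorial heart of the proof is identifying, for each candidate regime, which branch of each $\min$ is active at the optimum; once this active-constraint pattern is pinned down, the KKT stationarity conditions reduce to elementary algebraic inequalities on the logarithmic rate expressions, and the threshold conditions in the statement fall out directly. Verifying the strict-versus-equality comparison against the DF--DT time-shared region is then a matter of showing that the time-shared boundary is a chord of the pDF boundary in Regimes A (outside the sub-regime) and B, and coincides with it in the sub-regime of A and in Regime C.
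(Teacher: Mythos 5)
Your overall frame (reduce Theorem \ref{thm:complete_pDF_gaussian} by setting $\alpha_1=\alpha_2=0$, then decide per regime which of $\gamma_i^\star=0$, $\beta_i^\star=0$, or an interior split is optimal) matches the paper, and your sketches for Regimes C, D and E are essentially the paper's arguments. But there is a genuine gap in your treatment of Regime A, and it is the heart of the theorem. You propose to ``fix the weaker side at its DT optimum ($\beta_j^\star=0$, $\gamma_j^\star=P_j$)'' and let the \emph{stronger} user do partial DF below the threshold. The correct structure under \eqref{case1:a} is the opposite: the paper maximizes the weighted sum rate $R_2+\mu_1 R_1$, with $\mu_1$ the slope of the DF--DT chord, and shows the objective is monotone in $\gamma_1$ for every fixed $\gamma_2$ (the sign of $\partial f/\partial\gamma_1$ flips at $\gamma_2=\frac{g_{r1}^2-g_{21}^2}{g_{21}^2 g_{r2}^2}$), so the \emph{strong-relay} user's optimum is always extreme, $\gamma_1^\star\in\{0,P_1\}$ (it switches between full DF and DT), while the \emph{weak-relay} user's optimal private power $\gamma_2^\star$ is generally strictly interior --- the value $\gamma_2^\ast$ in \eqref{ffft}, or $\gamma_2^{\max}<P_2$ in the subcase $g_{21}^2+g_{2r}^2\frac{P_r}{P_1}<g_{r1}^2<g_{21}^2(1+g_{r2}^2P_2)$. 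The whole point of Regime A is that partially decoding user 2's message at the relay removes part of user 2's interference to user 1's common message; if you pin user 2 at DT you forfeit exactly this mechanism, you cannot reach the rate points strictly outside the DF--DT time-shared region (which require $\gamma_1^\star=0$ and an interior $\gamma_2$), and you cannot produce the sub-regime test $R_{S,1}\le R_{TS,1}$ with the explicit $\gamma_2^\ast$ of \eqref{ffft}.

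A secondary flaw is your derivation of the threshold $g_{r1}^2=(g_{21}^2+g_{2r}^2\frac{P_r}{P_1})(1+g_{r2}^2P_2)$ by ``equating marginal gains'' between the $I_1+I_4$ and $I_5$ branches: with $\alpha_1=\alpha_2=0$ and $\beta_3=P_r$, the destination bound reduces to the constant $C(g_{21}^2P_1+g_{2r}^2P_r)$, independent of the $\beta_1/\gamma_1$ split, so there is no marginal trade-off on that branch. The threshold instead comes from comparing the relay-decoding bound under worst-case interference, $\frac{g_{r1}^2P_1}{1+g_{r2}^2P_2}$, against this constant cap $g_{21}^2P_1+g_{2r}^2\frac{P_r}{P_1}P_1$ (equivalently, whether $\gamma_2^{\max}$ in \eqref{fmm} saturates at $P_2$), which is also what separates Regime A from Regime B. Finally, the paper's comparison with time-sharing is not a KKT argument on the full five-parameter region but a geometric one: compute the corner points of DF and DT, form the chord, and check whether the weighted-sum-rate optimum of the pDF pentagon (whose upper-left corner suffices because the pentagon's oblique face has slope $1$ while $\mu_1>1$) lies beyond it. Without that device, the strict-versus-coinciding dichotomy in Regimes A and B does not fall out of stationarity conditions alone.
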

%%%%%%%%%%%%%%%%%%%%%%%%%%%%%%%%%%
\begin{proof}
  See Appendix B for details.  The next section summarizes the approach. %\ref{app:1}.
\end{proof}
%%%%%%%%%%%%%%%%%%%%%%%%%%%%%%%%%%%%%%%%%%%%%
%\begin{figure}[t]
%    \begin{center}
%    \includegraphics[width=0.41\textwidth, height=58mm]{}%{FIG2_U2.eps}
%    \caption{Link-state  regions for the independent pDF
%      scheme  as specified in Theorem \ref{thm:4_cases}.
%      Here DT stands for direct transmission, DF for full
%      decode-forward relaying, and pDF for partial DF relaying.}
%    \label{figoprg}
%    \end{center}
%    \vspace{-6mm}
%\end{figure}
        %%%%%%%%%%%%%%%%%%%%%%%
%\begin{figure}[t]
%    \begin{center}
%    \includegraphics[width=0.4\textwidth]{}%{THM2_EXP.eps}%{FIG2_U2.eps}
%    \caption{Illustration for comparison between independent pDF relaying and DF-DT time sharing for Regime A in Thm.  \ref{thm:4_cases}.}
%    \label{figAAP}
%    \end{center}
%    \vspace{-6mm}
%\end{figure}
%%%%%%%%%%%%%%%%%%%%
\begin{figure*}
\begin{align}\label{ffft}
R_{S,1}&=C\left(g_{12}^2\gamma_2^{\ast}\right)
+C\left(\frac{g_{r2}^2(P_2-\gamma_2^{\ast})}{1+g_{r1}^2P_1+g_{r2}^2\gamma_2^{\ast}}\right)+
\mu_1C\left(\frac{g_{r1}^2P_1}{1+g_{r2}^2\gamma_2^{\ast}}\right) \\
\text{where} \quad \mu_1&=\frac{C(g_{12}^2P_2)-
C\left(\frac{g_{r2}^2P_2}{1+g_{r1}^2P_1}\right)}{C(g_{r1}^2P_1)-C(g_{21}^2P_1)},
\quad
\gamma_2^{\ast}=\left(\frac{g_{r2}^2g_{r1}^2\mu_1 P_1-(g_{12}^2-g_{r2}^2+g_{12}^2g_{r1}^2P_1)}
{g_{r2}^2g_{12}^2-g_{r2}^4+g_{r2}^2g_{12}^2g_{r1}^2P_1(1-\mu_1)}\right)^+, \nonumber \\
R_{TS,1}&=\frac{C(g_{r1}^2P_1)C(g_{12}^2P_2)
-C\left(\frac{g_{r2}^2P_2}{1+g_{r1}^2P_1}\right)C(g_{21}^2P_1)}{C(g_{r1}^2P_1)-C(g_{21}^2P_1)},\nonumber
\end{align}
\caption{Values for regime A (conditions (\ref{case2:b})) in Theorem \ref{thm:4_cases}. Here $\mu_1$ is
  the slope of the time-shared line between DF and DT rate
  regions, and $\gamma_2^{\ast}$ is the optimal $\gamma_2$ that maximizes
  $R_{S,1}$. $(x)^+$ denotes the positive part of $x$ where $(x)^+=x$ if $x\geq 0$ while $(x)^+=0$ if $x< 0$. Other values $R_{S,2}$, $R_{TS,2},$ $\mu_2$ and $\gamma_1^{\ast}$ are
  defined similarly by switching all subscripts from $1\rightarrow 2$ and
  $2\rightarrow 1$.}
  \vspace*{-6mm}
\end{figure*}
%%%%%%%%%%%%%%%%%%%%%
%%%%%%%%%%%%%%%%%%%%5
\vspace{-6mm}
\subsection{Analysis approaches}
  The conditions and transmission schemes in Theorem \ref{thm:4_cases} are obtained by comparing the rate regions of independent partial DF in (\ref{rates-fd})  with those of full DF and direct transmission.
   This comparison is performed by first examining the corner points of respective rate regions. It is quite straightforward to show that
   the partial DF scheme achieves a corner point outside the DF-DT time-shared region in several sub-cases of A and B.
    %under conditions (\ref{case1:a}), (\ref{case1:b}), (\ref{case12:c}) and (\ref{case12:d}).
%%%%%%%%%%%%%%%%%%%%%%%%%%%%%%%%%%%%%%%%%%%%%%%%%%%%%%%%%%%
\begin{figure}[t]
    \begin{center}
    \includegraphics[width=0.45\textwidth]{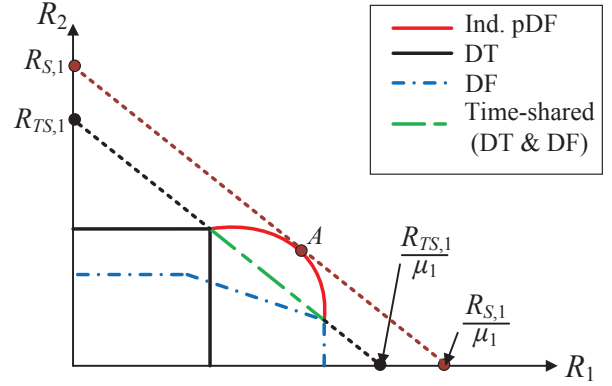}%{THM2_EXP.eps}%{FIG2_U2.eps}
    \caption{Illustration for comparison between independent pDF relaying and DF-DT time sharing for Regime A in Thm.  \ref{thm:4_cases}.}
    \label{figAAP}
    \end{center}
    \vspace{-6mm}
\end{figure}
\begin{figure}[t]
    \begin{center}
    \includegraphics[width=0.48\textwidth, height=58mm]{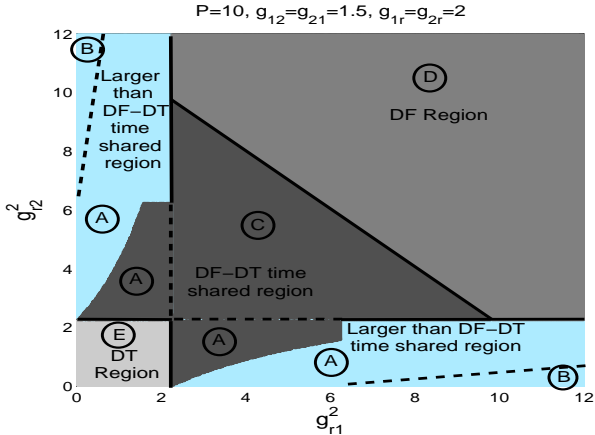}%{FIG2_U2.eps}
    \caption{Link-state  regions for the independent pDF
      scheme  as specified in Theorem \ref{thm:4_cases}.}
    \label{figoprg}
    \end{center}
    \vspace{-6mm}
\end{figure}
%%%%%%%%%%%%%%%%%%%%

    For the cases where the corner points of the pDF region coincides with the corner points of the DF-DT time-shared region,
    %as in (\ref{case2:b}), (\ref{case2:e}) and other sub-cases of A and B,
    we examine the maximum weighted sum rate curve of the pDF scheme to see if there is a point outside the DF-DT time-shared region
    as illustrated in Figure \ref{figAAP}. Specifically, we consider an optimization problem of the weighted sum-rate point (point A), where the weights
    are the same as the coefficients of the slope of the outermost time-sharing line between DF and DT (the dashed green line). The slope coefficient $\mu_1$ is obtained based on the easily computed  corner points of the DF and DT schemes  as shown in (\ref{ffft}). We then maximize the weighted sum
    rate $(R_2+\mu_1 R_1)$ subject to  rate constraints of the pDF scheme. If the maximum weighted sum
    rate point is outside the DF-DT time-sharing line $(R_{S,1}>R_{TS,1})$, then the pDF scheme strictly outperforms time sharing between DF and DT.
%%%%%%%%%%%%%%%%%%%%%%%%%%%%%%%%%%%%%5

    The optimization of the weighted sum rate also reveals the structure of the optimal input signals of the two users. For case A under conditions (\ref{case1:a}), this optimization for every given power allocation of user 2's private message results in the optimal power allocation of user 1's private message as either $0$ or $P$. Hence, the optimal signal structure for one user is pDF while for the other user is either full DF or DT. The opposite structure holds for conditions (\ref{case1:b}).

    For  case C, we show that full DF scheme maximizes the individual rates of pDF scheme while DT maximizes the sum rate.
    The rest of the region is obtained by time sharing of DF-DT schemes. For cases D and E, it is easy to show that the optimal power allocations
    for both private messages are either $0$ or $P$ which reduces the scheme to full DF (case D) or DT (case E).
%%%%%%%%%%%%%%%%%%%%%%%%%%%%%%%%%%%
\vspace{-4mm}
\subsection{Discussion of link regimes}
%%%%%%%%%my new interpretation%%%%%%%%%%%%%%%%%%%%%%%
%Assume both users transmit with their full powers. Then
The link regimes of Theorem \ref{thm:4_cases} are graphically illustrated in Figure \ref{figoprg}.
%%%%
Denote SNR$_{ij}=g_{ij}P_j$ as the received SNR at use $i$ for signals from node $j$, ignoring any interference or signals from other nodes. The link-state regimes in Theorem 2 can then be expressed in terms of these received SNRs. First, consider
regime A under condition (\ref{case1:a}) (condition (\ref{case1:b}) is similar with the user indices swapped).
This condition corresponds to user 1's relay link slightly stronger than its direct link, SNR$_{r1}>$SNR$_{21}$.
%Now,
%we consider regimes A and B where SNR$_{r2}<$SNR$_{12}$. In regime A (conditions
%(\ref{case1:a})), user 1-to-relay link is
 %slightly stronger than the direct link where the SNR$_{r1}$ is greater than SNR$_{21}$.
 However, when considering the interference at the relay from user $2$, the received SNR at the relay is less than the received SNR at user $2$ from both user $1$ and the relay, i.e., $\frac{\text{SNR}_{r1}}{1+\text{SNR}_{r2}}<$ $\text{SNR}_{21}+\text{SNR}_{2r}$. Then, the optimal transmission scheme is for user $2$ to perform pDF relaying while user $1$ switches between direct transmission and full DF relaying. The switching transmission from user $1$ depends on the power allocated to the private part of user $2$ as it determines the amount of interference at the relay and hence whether direct transmission or full DF relaying is preferred at user $1$. Performing pDF relaying for user
2 provides two benefits. It allows user 2 the option of achieving  its
highest individual rate obtained in this case from direct transmission
which is a special case of the pDF relaying. But it also allows user $2$ the option of reducing  the
interference to user 1's information at the relay since it allows the
relay to decode part of user 2's information, which effectively removes this
portion of user 2's signal from the interference, in order to decode
user 1's information.

In regime B (conditions (\ref{case12:c})), user 1's relay link is
 significantly stronger than its direct link such that even when considering the interference at the relay from user $2$, the received SNR at the relay from user $1$ is greater than the received SNR at user $2$ from both user $1$ and the relay, i.e., $\frac{\text{SNR}_{r1}}{1+\text{SNR}_{r2}}>$ $\text{SNR}_{21}+\text{SNR}_{2r}$. Therefore, the hybrid scheme is optimal in this case where user $1$ always performs full DF relaying while user $2$ performs DT.  This hybrid scheme gives user 2 its maximum rate in this region while presents no harm to user 1's rate even under the maximum interference from user $2$ at the relay because of the strong link from user $1$ to relay.

 Regime C indicates that when the received SNR at the relay from each user is greater than the received SNR at the other user while the total sum rate at the relay from both users is less than the sum of both DT rates, then the DF-DT time-shared scheme can achieve the same rate region as the pDF scheme. However, in regime D, when the total sum rate at the relay from both users is greater than the sum of both DT rates, then full DF relaying becomes the optimal scheme since both user-to-relay links are strong enough to support full DF relaying. Regime E indicates that when the received SNR at the relay from each user is less than the
corresponding direct SNR, the relay should not be used at all,
similar to the one-way relay channel.
%%%%%%%%%%%%%%%%%%%%%%%%%%%%%%%%%%%%%%%%%%%%%
%%%%%%%%%%%%%%%%%%%%%%%%%%%%%%%%%%%%%%%%%%%%%%%%%%%%%%%%%%%%%%%%
%
%%%%%%%%%%%%%%%%%%%%%%%%%%%%%%%%%%%%%%

Independent pDF relaying is rate-optimal in Regimes
A, B and C. Under conditions \eqref{case2} and \eqref{case5}, pDF achieves the same rate
region as DF-DT time-sharing; that is, it effectively implements time-sharing in these cases.
Under conditions \eqref{case1} and \eqref{case12}, however, pDF strictly outperforms DF-DT time-sharing.
%%%%%%%%%%%%%%%%%%%%%%%%%%%%%%%%%%%%%%%%%%%%%
%%%%%%%%%%%%%%%%%%%%%%%%%%%%%%%%%%%%%%%%%%%%%%%%%%%%%%%%%%%%%%%%
%%%%%%%%%%%%%%%%%%%%%%%%%%%%5
\section{A half-duplex decode-forward scheme}\label{sec: HDS}
In this section, we propose a DF scheme that utilizes coherent
relaying, independent relaying and partial relaying for the
half-duplex mode. Different from the full-duplex scheme, in
the half-duplex mode, each node can only either transmit or receive at
each time in the same frequency band. We consider a time-division
implementation for the half-duplex mode and propose a $6$-phase scheme as depicted in Figure \ref{fig:6_phases}(b).
\vspace{-4mm}
\subsection{Transmission scheme}
\label{tx-scheme}
There are several important differences between the half-duplex
and full-duplex schemes. First, because of half-duplex transmission,
block Markov superposition coding changes to phase superposition
coding. In  each block, each user transmits only new information and
does not repeat old information of the previous block. Each user
divides its message into two parts: a private part and a common part,
and  superposes the private message on the common one. But not all
messages are transmitted in each phase. During a specific phase,
if only the relay is receiving, then each user transmits only its
common message; otherwise, each user transmits both common and
private messages. The
superposition coding between phases creates a
coherent beamforming in certain phases (when a user and the relay
both transmit) in a way similar to the full-duplex beamforming created by block Markovity.
Second, in each phase, the relay performs only independent coding or
coherent coding, but not both. When forwarding both common messages,
the relay uses independent coding to send the bin index of this
message pair. When forwarding only one common message, the relay uses
coherent coding.
Third, decoding at each node is performed simultaneuosly across several
phases. The relay decodes the common messages at the end of phase $3$,
by simultaneously using signals received in all first three
phases. Each user decodes the other user's messages at the end of
phase $6$ using all its received signals in its receiving
phases. Since user decoding occurs at the end of the same block that
information is transmitted, the decoding delay is shorter than in the
full-duplex scheme by one block.

Next we describe the signal construction in each phase
and decoding rule at each node. % the relay and each user.

\subsubsection{Transmit signal construction}
Since block index is unnecessary in the half-duplex scheme, we will
drop it from the notation. Again let the message of user $i$ be $m_i$ for $i\in\{1,2\}$. Each user splits its message as
$m_{i} = (m_{i0},m_{ii})$, where $m_{i0}$ is the common message and $m_{ii}$ is the private
message. The transmit signals during each phase are as follows:
\begin{align}
  \text{Phase 1:} \quad
  X_{11}&=\sqrt{\alpha_{11}}U_1^{(1)}(m_{10})+\sqrt{\beta_{11}}V_1^{(1)}(m_{11})
  \nonumber\\
  \text{Phase 2:} \quad
  X_{22}&=\sqrt{\alpha_{22}}U_2^{(2)}(m_{20})+\sqrt{\beta_{22}}V_2^{(2)}(m_{22})
  \nonumber\\
  \text{Phase 3:} \quad X_{13}&=\sqrt{\alpha_{13}}U_1^{(3)}(m_{10}),\quad
  X_{23}=\sqrt{\alpha_{23}}U_2^{(3)}(m_{20})\nonumber\\
  %\end{align}
  %\begin{align}
  \text{Phase 4:}  \quad
  X_{14}&=\sqrt{\alpha_{14}}U_1^{(4)}(m_{10})+\sqrt{\beta_{14}}V_1^{(4)}(m_{11}),\nonumber\\
  X_{r4}&=\sqrt{\gamma_{34}}U_1^{(4)}(m_{10})\nonumber\\
  \text{Phase 5:} \quad
  X_{25}&=\sqrt{\alpha_{25}}U_2^{(5)}(m_{20})+\sqrt{\beta_{25}}V_2^{(5)}(m_{22}),\nonumber\\
  X_{r5}&=\sqrt{\gamma_{35}}U_2^{(5)}(m_{20})\nonumber\\
  \text{Phase 6:} \quad X_{r6}&= \sqrt{{\gamma}_{36}}W(m_{10},m_{20}),
  \label{hd-signaling}
\end{align}
where $U_i^{(k)}$ and $V_i^{(k)}$ ($i=1,2, k =1\ldots 6$) respectively
represent the signal for the common message and the private message
of user $i$ in phase $k$, and $W$ represents the signal for the bin
index of the common message pair decoded at the relay. All signals
$U_i^{(k)},V_i^{(k)},W$ are independent ${\cal CN}(0,1)$. Note that the signals encode the same
information but in different phases are independent of each other (for
example, $U_i^{(k)}$ all encodes $m_{i0}$ but are independent of each
other  for different values of $k$); however, the signals encoding the same information in the same
phase are fully correlated (for example, in phase 4, user 1 and the
relay both send the same signal $U_1^{(4)}$ that encodes $m_{10}$,
just with different power allocation). Since there is no risk of
ambiguity, we will drop phase superscripts in the subsequent
discussion.

The power allocation factors $\alpha_*,\beta_*,\gamma_*$ satisfy the following power constraints:
\begin{align}\label{half-power-const}
  \tau_1(\alpha_{11}+\beta_{11}) + \tau_3 \alpha_{13} +
  \tau_4(\alpha_{14}+\beta_{14})& \leq P_1\nonumber\\
  \tau_2(\alpha_{22}+\beta_{22}) + \tau_3 \alpha_{23} +
  \tau_5(\alpha_{25}+\beta_{25})& \leq P_2,\nonumber\\
  \tau_4 \gamma_{34} + \tau_5 \gamma_{35} + \tau_6{\gamma}_{36} & \leq P_r.
\end{align}
Note that the above constraints include power control with regards to
the phase durations $\tau_{*}$ to ensure that each block has a
constant average power.
\subsubsection{Decoding rules}
Again we use ML decoding at each node. The important feature here is
that each node uses the received signals in several
phases simultaneously. Specifically, the relay is receiving in
phases 1, 2 and 3; thus it decodes at the end of phase 3 by using the
received signals in all first three phases. Similarly, user 1 decodes
at the end of phase 6 by using its received signals in phases $(2, 5,
6)$; as so does user 2 using received signals in phases $(1, 4, 6)$.

At the relay, the received signals in the first three phases are
\begin{align}
  Y_{r1}&=h_{r1}\sqrt{\alpha_{11}}U_1+h_{r1}\sqrt{\beta_{11}}V_1+Z_{r1},
  \nonumber\\
  Y_{r2}&=h_{r2}\sqrt{\alpha_{22}}U_2+h_{r2}\sqrt{\beta_{22}}V_2+Z_{r2}
  \nonumber\\
  Y_{r3}&=h_{r1}\sqrt{\alpha_{13}}U_1+h_{r2}\sqrt{\alpha_{23}}U_2+Z_{r3}.
\end{align}
The relay is interested in decoding $(U_1,U_2)$ to obtain messages
$(m_{10},m_{20})$ and treats $(V_1,V_2)$ as noise in this process. The
optimal relay decoding rule can be written as follows:
\begin{align}
  \left( m_{10}, m_{20} \right) = \arg \max
  P(U_{1}|Y_{r1}) \cdot P(U_{2}|Y_{r2}) \cdot P(U_{1},U_{2}|Y_{r3}).
  \label{relay-dec-hd}
\end{align}
Note here the decoding probability is joint (as a product) across
three phases.

Decoding at the users is similar. Since operations at two users are
alike, we will only describe the decoding rule for user 2. User 2 is
in receiving mode in phases ($1, 4, 6$) with received signals:
\begin{align}
   Y_{21}&=h_{21}\sqrt{\alpha_{11}}U_1+
   h_{21}\sqrt{\beta_{11}}V_1+Z_{21}, \nonumber\\
   Y_{24}&=h_{21}\sqrt{\alpha_{14}}U_1 + h_{21}\sqrt{\beta_{14}}V_1 +
   h_{2r}\sqrt{\gamma_{34}}U_1+Z_{24}. \nonumber\\
   Y_{26}&=h_{2r}\sqrt{{\gamma}_{36}}W+Z_{26},
\end{align}
User 2 jointly decodes both messages $(m_{10},m_{11})$ at the end of
phase $6$ by the following rule:
\begin{align}
  \left( m_{10}, m_{11} \right) = \arg \max &
  P(U_{1},V_1|Y_{21}) \cdot P(U_{1},V_1|Y_{24}) \nonumber\\
  &\;\cdot P(W|Y_{26},m_{20}).
  \label{user-dec-hd}
\end{align}
Again the decoding probability is a product across three phases, and
in the last phase, the decoding rule utilizes side information of
$m_{20}$ available at user 2.
%%%%%%%%%%%%%%%%%%%%%%%%%%%%%%%%%%%%%%%%%%%%%%%%%%%%%%%%%%%%
%%%%%%%%%%%%%%%%%%%%%%%%%%%%%%%%%%%%%%%%%%%%%%%%%%%%%%%%%%%%%%%%%%%
\vspace{-4mm}
\subsection{Achievable rate region}
With the above transmit signals and decoding rules, we have the
following result.
\begin{thm}
\label{thm:gaussian_half_duplex}
The following rate region is achievable for the half-duplex Gaussian
two-way relay channel with the above  6-phase partial decode-forward transmission
scheme:
\begin{align} \label{hd-rate-region}
  R_1 & \leq \min \{J_1+J_4,J_5 \},\nonumber\\
  R_2 & \leq \min \{J_2+J_6,J_7 \},\nonumber\\
  R_1 + R_2 & \leq J_3 +J_4 + J_6,
\end{align}
where
\begin{align*}
  J_1 &
  =\tau_1 C\left(\frac{g_{r1}^2\alpha_{11}}{g_{r1}^2\beta_{11}+1}\right)
  +\tau_3 C(g_{r1}^2\alpha_{13}), \nonumber \\
  J_2 &= \tau_2
  C\left(\frac{g_{r2}^2\alpha_{22}}{g_{r2}^2\beta_{22}+1}\right)+\tau_3
  C(g_{r2}^2\alpha_{23}) \nonumber \\
  J_3 &=\tau_1
  C\left(\frac{g_{r1}^2\alpha_{11}}{g_{r1}^2\beta_{11}+1}\right)+\tau_2
  C\left(\frac{g_{r2}^2\alpha_{22}}{g_{r2}^2\beta_{22}+1}\right) \nonumber\\
  &\;+ \tau_3 C(g_{r1}^2\alpha_{13}+g_{r2}^2\alpha_{23})\nonumber\\
  J_4 &=\tau_1C(g_{21}^2\beta_{11}) + \tau_4C(g_{21}^2\beta_{14}),\nonumber\\
  J_5 &= \tau_1C\left(g_{21}^2(\alpha_{11}+\beta_{11})\right)+ \tau_6
  C(g_{2r}^2\gamma_{36})  \nonumber \\
  &\;+\tau_4C\left((g_{21}\sqrt{\alpha_{14}}+g_{2r}\sqrt{\gamma_{34}})^2
    +g_{21}^2\beta_{14}\right)
  \nonumber \\
  J_6 &=\tau_2C(g_{12}^2\beta_{22}) + \tau_5C(g_{12}^2\beta_{25})  \nonumber \\
  %J_6 & =\tau_2C(g_{12}^2\beta_{22}) + \tau_5C(g_{12}^2\beta_{25})  \nonumber \\
  J_7 & = \tau_2C\left(g_{12}^2(\alpha_{22}+\beta_{22})\right)+\tau_6
  C(g_{1r}^2\gamma_{36}) \nonumber \\
  &\;+ \tau_5C\left((g_{12}\sqrt{\alpha_{25}}+g_{1r}\sqrt{\gamma_{35}})^2
    +g_{12}^2\beta_{25}\right) \nonumber
  \end{align*}
and phase duration $\sum_i \tau_i = 1$, power allocation parameters satisfy the
constraints in \eqref{half-power-const}.
\end{thm}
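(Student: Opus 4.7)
The plan is a standard random-coding / joint-typicality argument adapted to the six-phase half-duplex structure, followed by Fourier--Motzkin elimination of the split-message rates. I would split each user's message as $m_i = (m_{i0}, m_{ii})$ with rates $R_{i0}, R_{ii}$ summing to $R_i$, and for each signal label appearing in \eqref{hd-signaling} I would draw an independent i.i.d.\ $\mathcal{CN}(0,1)$ Gaussian codebook of length $n\tau_k$; the codeword $W$ is indexed directly by the pair $(m_{10}, m_{20})$ decoded at the relay, with side information at each user playing the role of the explicit random binning used in the full-duplex scheme. The independence of codebooks across phases, combined with the per-phase noise independence in \eqref{6_signaling}, is what allows the per-phase mutual informations to add inside each rate bound. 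Encoding and power scaling follow \eqref{hd-signaling}--\eqref{half-power-const}.

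At the end of phase~3 the relay jointly typicality-decodes $(m_{10}, m_{20})$ from $(Y_{r1}, Y_{r2}, Y_{r3})$, treating the private signals $V_1, V_2$ as Gaussian noise. The three error events (wrong $m_{10}$ only, wrong $m_{20}$ only, both wrong) produce $R_{10}\le J_1$, $R_{20}\le J_2$, and $R_{10}+R_{20}\le J_3$, with phases~1 and~2 supplying the interfered single-user terms and phase~3 supplying the clean MAC sum term.

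At the end of phase~6 user~2 jointly typicality-decodes $(m_{10}, m_{11})$ from $(Y_{21}, Y_{24}, Y_{26})$ given $m_{20}$. In phase~4 the codeword $U_1^{(4)}$ is transmitted coherently by user~1 and the relay under the known-phase convention of Section~II, yielding the combining term $(g_{21}\sqrt{\alpha_{14}} + g_{2r}\sqrt{\gamma_{34}})^2$ that sits inside $J_5$. In phase~6 the signal $W(m_{10}, m_{20})$ depends on $m_{10}$ alone once $m_{20}$ is known, so wrong candidates $\hat{m}_{10}$ are rejected with exponent $\tau_6 C(g_{2r}^2 \gamma_{36})$ while errors in $m_{11}$ alone are invisible to phase~6. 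Enumerating the two nontrivial error patterns (wrong private only, wrong common) gives $R_{11}\le J_4$ and $R_{10}+R_{11}\le J_5$; a symmetric analysis at user~1 using phases~$2, 5, 6$ gives $R_{22}\le J_6$ and $R_{20}+R_{22}\le J_7$.

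Fourier--Motzkin elimination of $(R_{10}, R_{11}, R_{20}, R_{22})$ subject to $R_i = R_{i0}+R_{ii}$ and non-negativity then yields the three surviving bounds $R_1 \le \min\{J_1+J_4, J_5\}$, $R_2 \le \min\{J_2+J_6, J_7\}$, and $R_1+R_2 \le J_3+J_4+J_6$; competing sum bounds such as $J_5+J_7$ or $J_1+J_2+J_4+J_6$ are dominated because $J_1 \le J_3$ and $J_2 \le J_3$ by direct inspection of the formulas in the theorem. The step I expect to be the main obstacle is the joint-typicality error analysis at each user across three structurally different phases --- one pure single-user, one with coherent combining, and one carrying only the side-informed $W$ --- where one must enumerate error patterns by which of $m_{i0}, m_{ii}$ is wrong so that the per-phase mutual-information contributions line up exactly with the $J_5$ and $J_7$ expressions stated in the theorem.
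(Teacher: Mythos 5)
Your proposal is correct and follows essentially the same route as the paper: per-phase Gaussian superposition codebooks, joint (simultaneous) typicality decoding over all receiving phases at the relay and at each user with private signals treated as noise at the relay and side information exploited in phase 6, then Fourier--Motzkin elimination of the split rates $(R_{10},R_{11},R_{20},R_{22})$. Your only deviation---indexing $W$ directly by the pair $(m_{10},m_{20})$ rather than by a bin index as the paper does---is immaterial, since with the side information available at each user both choices yield the same error events and hence the same region (and the redundant sum bounds you mention are dropped simply because each is implied by the sum of two of the listed individual bounds).
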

\begin{IEEEproof}[Sketch of the proof]
  The rate region follows directly from the signal construction and
  decoding rules in Section \ref{tx-scheme}. In particular, decoding
  of common messages at the relay produces rate terms $J_1,J_2,J_3$ as
  in a multiple access channel. Decoding of the common and private
  messages of user 1 at user 2 produces rate terms $J_4$ and $J_5$ as
  a direct result of the superposition coding structure. Similarly,
  $J_6$ and $J_7$ result from the decoding at user 1. Note that all
  these rate expressions contain multiple terms that account for the
  fact that a message is sent, received and decoded simultaneously over
  multiple phases.
\end{IEEEproof}
%%%%%%%%%%%%%%%%%%%%%%%%%%%%%%%%%%%%%%%%%%%%%%%%%%%%
\begin{rem}\label{mh1}
Employing sequential decoding for the common and then private message parts instead of joint decoding of both message parts  can lead to the same achievable rate region. However, simultaneous  decoding among the phases must also be employed. Specifically, we can achieve the same rate region  as in (\ref{hd-rate-region}) if user $2$ sequentially decodes the common message part $(m_{10})$ and then the private message part $(m_{11})$  instead of jointly decoding them, provided that user $2$ decodes $m_{10}$ simultaneously  from all received signals in phases $1,$ $4$ and $6$ and then decodes $m_{11}$  simultaneously from all received signals in phases $1$ and $4$.
\end{rem}
%%%%%%%%%%%%%%%%%%%%%%%%%%
\vspace{-4mm}
\subsection{Comparison with existing schemes}\label{hdincls}
%\subsubsection{Inclusion of existing schemes}\label{hdincls}
In half-duplex rate region \eqref{hd-rate-region}, by setting the
appropriate phase durations to $0$, we can recover existing results.
By setting $\tau_4=\tau_5=0$, the proposed $6$-phase scheme
   reduces to the 4-phase scheme in our previous work \cite{zhong2012partial}, which
   encompasses  the 2, 3 and 4-phase DF schemes in \cite{kim2008per}.% as special cases.

The proposed scheme is simpler than the 6-phase scheme in \cite{gerdes} as each user splits its message into $2$ parts only instead of $6$ parts as in \cite{gerdes}. In practice, fewer split parts lead to fewer parameters to be optimized for performance. Although sequential decoding used in \cite{gerdes} is simpler than joint decoding, it leads to a smaller achievable rate region. We also show in Remark \ref{mh1} that sequential decoding can be used to first decode the common and then the private message part provided that the decoding is performed simultaneously  among the phases. The rate region of our scheme in  Theorem \ref{thm:gaussian_half_duplex} strictly encompasses the rate region of the scheme in \cite{gerdes} by setting $\beta_{11}=\beta_{22}=0$, as also shown in numerical examples in Figure \ref{fig:HDFIGTW}.

Our scheme is also  simpler in terms of the message splitting than the 6-phase scheme in \cite{ASHARA, nw6p}, in which  each user splits its message into $3$ parts. However,   there is no superposition coding or coherent transmission in this scheme which simplifies its signaling.
Specifically, in phases $4$ and $5$, each user and the relay transmit independent signals non-coherently to the other user.
The rate region of the scheme in \cite{ASHARA, nw6p} can be obtained from Theorem \ref{thm:gaussian_half_duplex}
as a special case by setting $\beta_{11}=\beta_{22}=\alpha_{14}=\alpha_{25}=0$, and  is smaller than our achievable rate region.
%%%%%%%%%%%%%%%%%%
%%%%%%%%%%%%%%%%%%%%%%%%%%%%%%%%%%%%%%%%%%%%%%%%%%%%%%%%%%%%%%%%%%%%%%%%%%%%%%%%%%%%%%%%%%%%%%%%%%%%%%%%%%%
%\vspace{-6mm}
%%%%%%%%%%%%%%%%%%%%%%%%%%%%%%%%%%%%%%%%%%%%%%%%%%%%%%%%%%%%%%%%%%%%%%%%%%%%%%%%%%%%%%%%%%%%%%%%%%%%%%%%%%%%
%----------------------------------------------------------------------
\section{Numerical results}\label{sec: NR}
In this section, we provide numerical examples for the rate region
achieved by the proposed full and half-duplex schemes and compare it
with existing schemes. These regions are obtained by taking the convex closure of all rate regions obtained with all possible resource allocations,
 including powers and phases \cite{gamal2010lecture}. We also provide numerical illustration for the
optimal link regimes as shown in Theorem \ref{thm:4_cases}. In all figures, we set $P_1=P_2=P_r=P$.
%Figures \ref{fig:case1} and \ref{fig:HDFIGTW} are obtained with all possible research allocations.

Figures \ref{fig:case1} compares
between the achievable rate regions in the
full-duplex mode for proposed independent and coherent pDF scheme in Theorem \ref{thm:complete_pDF_gaussian}, existing schemes and the cut-set outer bound  under channel conditions specified in the figure. Results show that under such channel
conditions, the proposed scheme outperforms all existing schemes and
their time-sharing.
For the independent pDF scheme, these results are in agreement with Theorem \ref{thm:4_cases}. The
channel configurations in Figure \ref{fig:case1} corresponds to Regime
$B$ with condition (\ref{case12:c}), under which the composite scheme outperforms time-sharing of existing schemes.

Similar results are obtained for the half-duplex schemes in Figure
\ref{fig:HDFIGTW} with link qualities shown on top of  the figure. The asymmetric links between the two users can appear in practice in FDD systems.
The proposed pDF scheme includes as special cases the $4$-phase
schemes in \cite{kim2008per}\cite{zhong2012partial}. Note that our previous $4$-phase scheme in \cite{zhong2012partial} achieves a rate region very close to that of the $6$-phase composite scheme for the chosen channel configuration. The composite $6$-phase scheme also  outperforms the existing $6$-phase schemes \cite{gerdes, ASHARA, nw6p}.
The performance improvement in our scheme comes from
the simultaneous  decoding among all phases instead of separate decoding at
each phase as in \cite{gerdes} or only among a few phases as in
\cite{ASHARA, nw6p}. Furthermore, the schemes in \cite{ASHARA, nw6p}
do not have coherent relaying.
%
%%%%%%%%%%%%%%%%%%%%%%%%%%%%%%%%%%%%%%%%%%%%%%%%%%%%%%%%%%%
\begin{figure}[t]
    \begin{center}
    \includegraphics[width=0.45\textwidth, height=55mm]{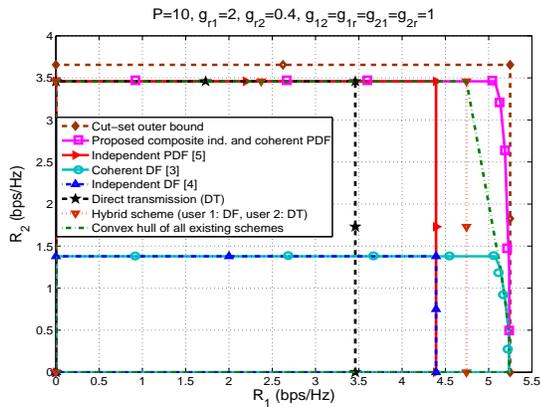}%{FIG2KMV6.eps}%{fig_jou_Gaus2sc.eps}
    \caption{Full-duplex composite pDF scheme achieves rates outside the time-shared region of
      existing schemes.}
    \label{fig:case1}
    \end{center}
    \vspace{-6mm}
\end{figure}
\begin{figure}[!t]
    \begin{center}
    \includegraphics[width=0.45\textwidth, height=55mm]{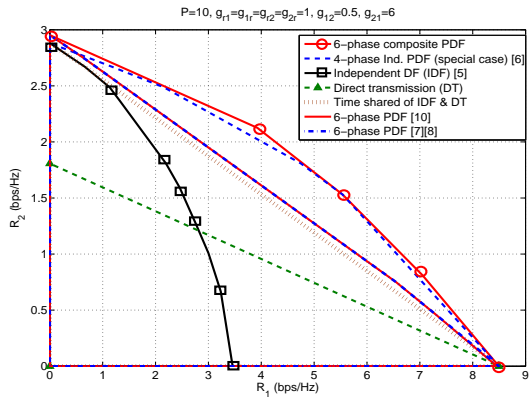}%{TWHDFIGF223.eps}{figure=only_time_sharing.eps}
    %{fig_jou_Gaus2sc.eps}
    \caption{Rate region comparison between the proposed $6$-phase composite and existing
      schemes in the half-duplex mode.}
    \label{fig:HDFIGTW}
    \end{center}
    \vspace{-6mm}
\end{figure}
%%%%%%%%%%%%%%

Figure \ref{fig:opregions} illustrates the different geometric
link regimes on a 2$D$ plane for the independent pDF scheme as
described in Theorem \ref{thm:4_cases}. The locations of user 1 and
user $2$ are fixed at $(-1,0)$ and $(1,0)$, respectively, while the
relay can be anywhere on the plane. The path-loss model determines the
channel gain between any two nodes based on their relative distance
as $g_{ij}=d_{ij}^{-\gamma/2}$ with $\gamma=2.4,$ which is an example  for indoor propagation at $900$MHz \cite{raport}.

The results of Figure \ref{fig:opregions} match the results and implications of Theorem \ref{thm:4_cases}. Figure \ref{fig:opregions} shows that full DF relaying is optimal when the relay is
close to both users and has strong links to them such that it can
fully decode their information (case $D$ in Theorem \ref{thm:4_cases}). The hybrid scheme with full DF relaying from one user and
DT from the other user is optimal when the relay
is very close to one user and far from the other (case $B$ in Theorem \ref{thm:4_cases}). Switching between full DF relaying and DT from one user and performing
pDF from the other user is optimal when the relay
is slightly close to one user and far from the other (case $A$ in Theorem \ref{thm:4_cases}). Partial DF relaying from both users is optimal when the relay
is slightly far from both users, where   pDF relaying obtains the same
performance as DF-DT time sharing (case $C$ in Theorem \ref{thm:4_cases}).
DT from both users is optimal when the relay is too far from both users such that
decode-and-forward degrades performance (case $E$ in Theorem \ref{thm:4_cases}). As the relay moves from afar towards a user, the transmission strategy for that user changes from DT to switching between DF and DT, then to full DF when the relay is very close to that user, and to pDF when the relay is moving further away again. Figure \ref{fig:opregions} shows that in the regions
where the use of relay is viable, independent pDF relaying can
improve performance upon existing schemes for a large portion of these
regions.
%%%%%%%%%%%%%%%%%%%%%%%%%%%
%%%%%%%%%%%%%%%%%%%%%%%%%%5
\begin{figure}[t]
    \begin{center}
    \includegraphics[width=0.45\textwidth, height=55mm]{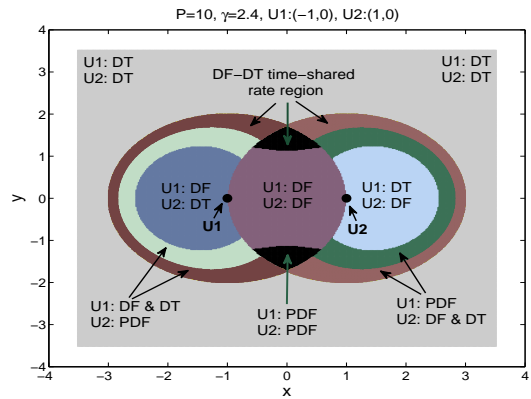}%{REGFIG6.eps}%{regions22.eps}%{figure=case1.eps}%{fig_jou_Gaus2sc.eps}
    \caption{ Geometric regions of relay location for the optimal use
      of different sub-schemes. Some regions overlap when the rate performance is the same.}
    \label{fig:opregions}
    \end{center}
    \vspace{-6mm}
\end{figure}
\begin{figure}[!t]
    \begin{center}
    \includegraphics[width=0.45\textwidth, height=55mm]{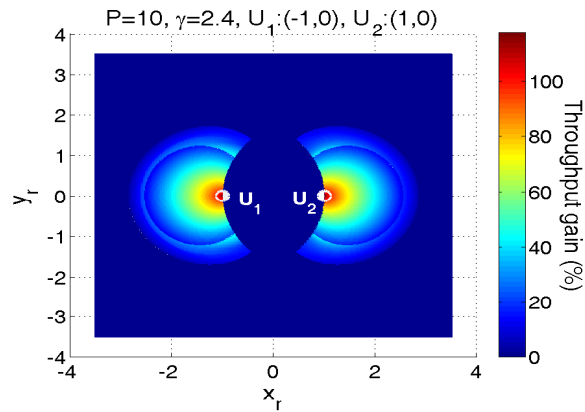}%{THROUPUT_GAIN_FIG.eps}%{regions22.eps}
    \caption{Throughput gain obtained by independent pDF compared with DF-DT time sharing for the same spatial regions in Figure \ref{fig:opregions}.}
    \label{fig:ThroG}
    \end{center}
    \vspace{-6mm}
\end{figure}

Figure \ref{fig:ThroG} shows the throughput gain obtained by using the independent pDF scheme compared with the DF-DT time-shared
scheme for the same spatial regions as in Figure \ref{fig:opregions}. Specifically, %for condition (\ref{case1:a}) in Theorem \ref{thm:4_cases},
the throughput gain $(\Gamma)$ is obtained as $\Gamma=\frac{R_{S,1}-R_{TS,1}}{R_{TS,1}}100\%$. Results show that in the pDF region in Figure \ref{fig:opregions} (the black and green regions), independent
pDF relaying can improve the throughput by up to $40\%$. Moreover, in the hybrid DF-DT region in Figure \ref{fig:opregions} (the two blue regions), the hybrid scheme can
 improve the throughput by more than $100\%$.
 These gains are significant, especially considering that in the hybrid regions, the optimal strategy is quite simple as having only one user use the relay in the full DF mode while the other performs DT. Even though there is no pDF relaying in the hybrid regions, the knowledge that this hybrid scheme is optimal comes as a direct result of link-regime analysis of pDF relaying. These results emphasize the importance of adapting the transmission schemes according to the link regimes.
%%%%%%%
%\vspace{-6mm}
%%%%%%%%%%%%%%%%%%%%%%%%%%%%%%%%%%%%%%%%%%%%%%%%%%%%%%%%%%%%%%%%%%%%%%%%%%%%%%%%%%%%%%%%%%%%%%%%%%%%%%%%%%%%%%%%%%%%%%%%%%%%%%%%%%%%%%%%
%----------------------------------------------------------------------
\section{Conclusion}\label{sec: CN}
We have proposed DF-based schemes for the full and half-duplex
TWRC that utilize coherent relaying, independent relaying, and
partial relaying techniques. The proposed schemes outperforms all
existing schemes by achieving a larger rate region. We further analyze
independent and partial relaying in detail and analytically
derive the link regimes for the optimal use of these techniques. These
regimes show when it is necessary to use independent partial DF
relaying, and when it is sufficient to use simpler schemes such as
full DF relaying or just direct transmission, in order to get the
largest possible rate region. Contrast to established results for the
one-way relay channel, we show that partial DF relaying can outperform
full DF relaying in the TWRC. Independent and partial
DF relaying achieves a strictly larger rate region than the DF-DT
time-shared region when the relay
has strong link with one user and weak link with the other user, or
when it has links to both users that are just marginally stronger than
their direct links. These link regimes are useful for adapting
 the proposed schemes to practical  link configurations.
%======================================================================
\appendices
%%%%%%%%%%%%%%%%%%%%%%%%%%%%%%%%%%%%%%%%
%\vspace{-6mm}
\section*{Appendix A: Proof of Theorem \ref{thm:complete_pDF_gaussian}}
\label{app_partial_alldf}
\setcounter{subsubsection}{0}
In this appendix, we provide the proof of Theorem
\ref{thm:complete_pDF_gaussian} through information theoretic analysis
of a discrete memoryless channel specified by a collection of pmf
$p(y_1,y_2,y_r|x_1,x_2,x_r)$, where $x_k, y_k$ ($k=1,2,r$, $r$ stands
for the relay) are the input and output signals of node $k$.
We define a $(2^{nR_1},2^{nR_2},n,P_e)$ code based on standard definitions as in \cite{gamal2010lecture}.
 %and use a block coding scheme to send $b-1$ messages over
%$b$ blocks of $n$ symbols each.
%
%consists of two
%independent message sets $\mathcal{M}_1=[1:2^{nR_1}]$ and
%$\mathcal{M}_2=[1:2^{nR_2}]$, three encoding functions
%$f_{1,i},f_{2,i},f_{r,i}$, $i=1, \ldots, n$ that maps in each block
%the message (null at the relay) and received sequence upto time $i-1$
%into a code symbol to be transmitted at time $i$, and two decoding
%function $g_1,g_2$ at the user nodes that recovers from the received
%sequences the message of the other user. The average error
%probability, achievable rate and capacity region follow standard
%definition as in \cite{gamal2010lecture}.
%
%We use a block coding scheme in which each user sends $b-1$ messages
%over $b$ blocks of $n$ symbols each. The rate loss due to no new
%information sent in the lass block is asymptotically vanishing as
%$b\to\infty$.
%%%%%%%%%%%%%%%%%%%%%%%%%555
\subsubsection{Codebook generation}
The codebook is generated according to the joint distribution
\begin{align}
  P&=p(w_1)p(u_1|w_1)p(x_1|u_1,w_1)p(w_2)p(u_2|w_2)\nonumber\\
  &\;\;\times p(x_2|u_2,w_2)p(x_r|w_1,w_2)
  \label{code-distr}
\end{align}
The messages of each user are encoded
independently by superposition coding. For each block $j\in [1:b]$,  according to $p(w_1)p(u_1|w_1)p(x_1|u_1,w_1)$, independently
generate $2^{nR_{10}}$ codewords $w_1^n(m_{10,j-1})$, $2^{nR_{10}}$ codewords $(u_1^n(m_{10,j}|m_{10,j-1}))$ and $2^{nR_{11}}$
codewords $(x_1^n(m_{11,j}|m_{10,j},m_{10,j-1}))$ that encode $m_{10,j-1},$ $m_{10,j}$ and $m_{11,j}$, respectively.
%%%%%%%%%%%%
%generate $2^{nR_{10}}$ independent codewords $w_1^n(m_{10,j-1})$ that encode $m_{10,j-1}$. Then, using superposition coding,
%generate $2^{nR_{10}}$ codewords $(u_1^n(m_{10,j}|m_{10,j-1}))$ and $2^{nR_{11}}$ codewords $(x_1^n(m_{11,j}|m_{10,j},m_{10,j-1}))$
%that encode $m_{10,j}$ and $m_{11,j}$, %respectively.
%%%%%%%%%%%%%5
Similarly for user 2's  messages.
% (simply changing
%$1 \to 2$ in all subscripts).

The relay only encodes the
common messages of both users in the previous block by using both
superposition coding and random binning. First, equally and uniformly
divide common message pairs $(m_{10}, m_{20})$ into $2^{nR_{r}}$
bins and let $K(m_{10}, m_{20})$ denote the bin index. Then superpose
codewords for bin indices on the codeword generated at each user for
each common message. Specifically, for each pair of previous common
messages $(m_{10,j-1},m_{20,j-1})$, generate $2^{nR_{r}}$ sequences
$x_r^n(K(m_{10,j-1},m_{20,j-1})|m_{10,j-1},m_{20,j-1})\sim\prod
^n_{i=1}p(x_{ri}|w_{1i},w_{2i})$.
\subsubsection*{Decoding at the relay}
At the end of block $j$, given $(m_{10,j-1}, m_{20,j-1})$ is known, the relay uses joint decoding to find the unique
$(\tilde{m}_{10,j},\tilde{m}_{20,j})$ such that
\begin{align}
  \label{dec_relay}
  (w_1^n,w_2^n,x_r^n,u_1^n(\tilde{m}_{10,j}|\cdot),
  %&\nonumber  \\
  u_2^n(\tilde{m}_{20,j}|\cdot),
  y^n_{r,j}) \in T^{(n)}_{\epsilon},
\end{align}
where $\cdot$ denotes known messages and $T^{(n)}_{\epsilon}$ denotes
the typical set. As in a multiple access channel, the relay can
success with vanishing error as $n\to\infty$ if
\begin{align}\label{sd1}
  R_{10} &\leq I(U_1 ;Y_r| W_1,W_2,U_2,U_r,X_r)\triangleq I_1; \nonumber\\
  R_{20} &\leq I(U_2 ;Y_r| W_1,W_2,U_1,U_r,X_r)\triangleq I_2; \nonumber\\
  R_{10}+R_{20} &\leq I(U_1,U_2 ;Y_r| W_1,W_2,U_r,X_r)\triangleq I_3.
\end{align}
%\begin{align}\label{sd1}
%  R_{10} \leq I(U_1 ;Y_r| W_1,W_2,U_2,U_r,X_r)\triangleq I_1; \nonumber\\
%  R_{20} \leq I(U_2 ;Y_r| W_1,W_2,U_1,U_r,X_r)\triangleq I_2; \nonumber\\
%  R_{10}+R_{20} \leq I(U_1,U_2 ;Y_r| W_1,W_2,U_r,X_r)\triangleq I_3.
%\end{align}
%%%
\subsubsection*{Decoding at the user}
Each user uses forward sliding window decoding over 2 consecutive
blocks. Take user 2 for example. Given that it has correctly decoded
messages in all blocks up to $j-2$, at the end of block $j$, user 2
finds the unique $(\hat{m}_{10,j-1},\hat{m}_{11,j-1})$ such that
\begin{align*}
  \big(u_1^n(\hat{m}_{10,j-1}|\cdot),
  x_1^n(\hat{m}_{11,j-1}|\hat{m}_{10,j-1},\cdot),
  w_1^n,w_2^n,&\nonumber\\
  x_r^n,u_2^n,x_2^n, y^n_{2,j-1}\big)& \in T^{(n)}_{\epsilon} \nonumber \\
  \text{and} \;\;
  (w_1^n(\hat{m}_{10,j-1}),x_r^n(\hat{m}_{10,j-1},\cdot),
  w_2^n,u_2^n,  x_2^n,y^n_{2,b})& \in T^{(n)}_{\epsilon}.
\end{align*}
Following the standard error analysis in \cite{hsce612NIT}, we obtain the following rate constraints:
%the probabilities of the error events of this decoding rule approach $0$ as $n\rightarrow \infty$ if
\begin{align}\label{sd2}
  R_{11} & \leq I(X_1;Y_2|W_1,U_1,W_2,U_2,X_2,X_r)\triangleq I_4, \nonumber\\
%\end{align}
%\begin{align}\label{sd3}
  R_{10}+R_{11} &\leq I(U_1,X_1;Y_2|W_1,W_2,U_2,X_2,X_r)\nonumber\\
  &\;\;  +I(W_1,X_r;Y_2| W_2,U_2, X_2)\triangleq I_{5}.%\nonumber\\
  %&=I(U_1,X_1,W_1,X_r;Y_2|W_2,U_2,X_2)\triangleq I_{5}.
\end{align}
Similarly, for the decoding at user 1 to success with vanishing error,
\begin{align}\label{sd4}
R_{22} &\leq I(X_2;Y_1|W_1,U_1,W_2,U_2,X_1,X_r,U_r)\triangleq I_6\nonumber\\
R_{20}+R_{22} &\leq   I(U_2,X_2,W_2,X_r,U_r;Y_1|W_1,U_1,X_1)\triangleq I_{7}.
\end{align}
By performing Fourier-Mozkin elimination, we obtain the following
achievable rate region:
\begin{align}
  \label{symbol_constraint}
  R_1 &\leq \min\{I_1+I_4, I_{5}\},\nonumber\\
  R_2 &\leq \min\{I_2+I_{6}, I_{7}\},\nonumber\\
  R_1+R_2 &\leq I_3+I_4+I_{6},
\end{align}
for some joint distribution $P$ as in \eqref{code-distr} where $I_1$---$I_7$ are given in (\ref{sd1})---(\ref{sd4}).
%\begin{align}
%  I_1 &= I(U_1 ;Y_r| W_1,W_2,U_2,U_r,X_r),\quad%  \nonumber\\
%  I_2= I(U_2 ;Y_r| W_1,W_2,U_1,U_r,X_r),  \nonumber\\
%  I_3&= I(U_1,U_2 ;Y_r| W_1,W_2,U_r,X_r),\quad %\nonumber\\
%  I_4= I(X_1;Y_2|W_1,U_1,W_2,U_2,X_2,X_r),\nonumber\\
%  I_5&=I(U_1,X_1,W_1,X_r;Y_2|W_2,U_2,X_2),\quad %\nonumber\\
%  I_6= I(X_2;Y_1|W_1,U_1,W_2,U_2,X_1,X_r,U_r),\nonumber\\
%  I_7&=   I(U_2,X_2,W_2,X_r,U_r;Y_1|W_1,U_1,X_1)
%\end{align}
%for some joint distribution $P$ as in \eqref{code-distr}.
Apply the signaling in \eqref{tx-signals} to Gaussian channel
\eqref{GTWRC}, we obtain the rate region in Theorem
\ref{thm:complete_pDF_gaussian}.
%%%%%%%%%%%%%%%%%%%%%%%%%%%%%%%%%%%%%%%%%%%%%%%%%%%%%%%%%%%%%%%%%%%%%
%----------------------------------------------------------------------
%----------------------------------------------------------------------
%%%%%%%%%%
%%%%%%%%%%%%%%%%%%%%%%%%%%%%%%%%%%%%%%%%
%\vspace{-4mm}
\setcounter{subsubsection}{0}
\section*{Appendix B: Proof of Theorem \ref{thm:4_cases}}\label{app:1}
%\begin{figure*}
%\begin{align}\label{eq:gaussian_par}
%    R_1 &\leq \min \left\{J_1, J_4\right\},\;\;
%    R_2 \leq \min \left\{J_2,J_5\right\},\;\;
%    R_1+R_2 \leq J_3 \\
%    \text{~~where } & J_1=C\left(\frac{g_{r1}^2{\beta_1}}{g_{r1}^2\gamma_1+g_{r2}^2\gamma_2+1}\right)+C(g_{21}^2\gamma_1),\;\;
%    J_4=C\left((g_{21}^2+g_{2r}^2)P\right),\nonumber\\
%    &J_2=C\left(\frac{g_{r2}^2{\beta_2}}{g_{r1}^2\gamma_1+g_{r2}^2\gamma_2+1}\right)+C(g_{12}^2\gamma_2),\;\;
%    J_5=C\left((g_{12}^2+g_{1r}^2)P\right),\nonumber\\
%    &J_3=C\left(\frac{g_{r1}^2{\beta_1}+g_{r2}^2{\beta_2}}{g_{r1}^2\gamma_1+g_{r2}^2\gamma_2+1}\right)+C(g_{21}^2\gamma_1)+C(g_{12}^2\gamma_2),
%    \nonumber\\
%    &0\leq {\beta_1}, {\beta_2} \leq 1,
%    \beta_1 + \gamma_1=P, \beta_2 + \gamma_2=P,\text{ and }
%    C(x)=\log(1+x).\nonumber
%  \end{align}
%  \vspace{-6mm}
%\end{figure*}
For the independent pDF scheme, the achievable rate region as in Theorem 1 reduces to that in (\ref{eq:gaussian_par}).
We will analyze the region in (\ref{eq:gaussian_par}) here.
%%%%%%%%%%%%%%%%%%%%%%%%%%%%%%%%%%%%%%%%%%%%%%%%%%%%
%\begin{figure*}[!t]
\begin{align}\label{eq:gaussian_par}
     &R_1 \leq \min \left\{J_1, J_4\right\},\;\;
    R_2 \leq \min \left\{J_2,J_5\right\},\nonumber\\
    &R_1+R_2 \leq J_3,\;\;\text{where} \\
    & J_1=C\left(\frac{g_{r1}^2{\beta_1}}{g_{r1}^2\gamma_1+g_{r2}^2\gamma_2+1}\right)+C(g_{21}^2\gamma_1),\nonumber\\
    &J_2=C\left(\frac{g_{r2}^2{\beta_2}}{g_{r1}^2\gamma_1+g_{r2}^2\gamma_2+1}\right)+C(g_{12}^2\gamma_2),\nonumber\\
    &J_3=C\left(\frac{g_{r1}^2{\beta_1}+g_{r2}^2{\beta_2}}{g_{r1}^2\gamma_1+g_{r2}^2\gamma_2+1}\right)+C(g_{21}^2\gamma_1)+C(g_{12}^2\gamma_2),
    \nonumber\\
    &J_4=C\left((g_{21}^2+g_{2r}^2)P\right),\;\;J_5=C\left((g_{12}^2+g_{1r}^2)P\right),\nonumber\\
    &0\leq {\beta_1}, {\beta_2} \leq 1,\;\;
    \beta_1 + \gamma_1=P, \;\;\beta_2 + \gamma_2=P,\nonumber\\\
    &\text{ and }\;\;
    C(x)=\log(1+x).\nonumber
  \end{align}
%  \vspace{-7mm}
%\end{figure*}
\subsection{Case $1$: Equations (\ref{case1}) and (\ref{case2})}
We only prove here the conditions in (\ref{case1:a}) as those in
(\ref{case1:b}) can be shown similarly. This case can be divided into the following $4$ subcases:
%%%%%%%%%%%%%%%%%%%%%%%%%%%%%%%%%%%%%%%%%%%%%%%%%%%%%%%%%
\subsubsection{Subcase $1.1$: $g_{21}^2<g_{r1}^2<\min\left\{g_{21}^2+g_{2r}^2\frac{P_r}{P_1},\;g_{21}^2(1+g_{r2}^2P_2)\right\}$}
%%%%%%%%%%%%%%%%%%%%%%%%%%%%%%%%%%%%%%%%%%%%%5
\begin{figure}[!t]
    \begin{center}
    \includegraphics[width=0.4\textwidth, height=48mm]{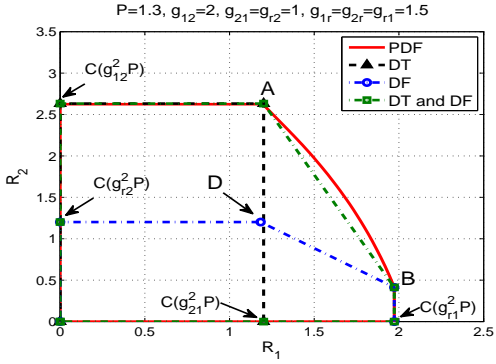}%{fig33c2.eps}%{fig_jou_Gaus2sc.eps}
    \caption{Subcase $1.1$.}
    \label{mcf}
    \end{center}
    \vspace{-6mm}
\end{figure}
\begin{figure}[!t]
    \begin{center}
    \includegraphics[width=0.4\textwidth, height=48mm]{}%{}%{fig_jou_Gaus2sc.eps}
    \caption{Subcase $1.2$.}\label{fig:app_case0}% in (\ref{eq:case2}).
    \end{center}
    \vspace{-6mm}
\end{figure}
%%%%%%%%%%%%%%%%%%%%%%%%%%%%%%%%%%%%%%%%%%%%%%%%%%%%%%%%%%%%%%%%%%%%%%%%%%%%%%%%%%%%%%%%%55
%\begin{figure}[t]
%    %\begin{center}
%    \centering
%    \begin{minipage}[b]{0.485\linewidth}
%    \includegraphics[width=0.95\textwidth, height=38mm]{fig33cf.eps}%{fig33c2.eps}%{fig_jou_Gaus2sc.eps}
%    \caption{Subcase $1.1$.}
%    \label{mcf}
%    \end{minipage}
%    \vspace*{-2mm}
%    \hfill
%    \begin{minipage}[b]{0.485\linewidth}
%    \includegraphics[width=0.95\textwidth, height=38mm]{FIGC1F00.eps}%{FIGC1F1.eps}%{fig_jou_Gaus2sc.eps}
%    \caption{Subcase $1.2$.}\label{fig:app_case0}% in (\ref{eq:case2}).
%    \end{minipage}
%    \vspace*{-2mm}
%\end{figure}
%%%%%%%%%%%%%%%%%%%%%%%%%%%%%%%%%%%%%%%%%%%%%%%%%%%%%%%%%%%%%%%%%5
\begin{figure}[t]
    \begin{center}
    \includegraphics[width=0.4\textwidth, height=48mm]{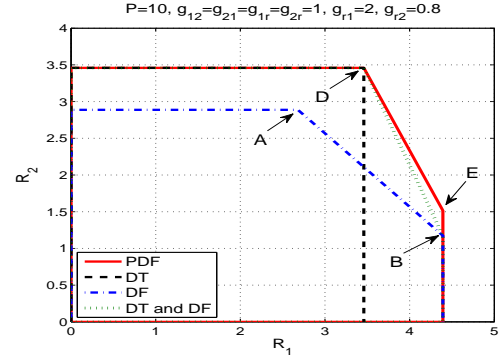}%{fig_jou_Gaus2sc.eps}
    \caption{Subcase $1.3$.}\label{fig:app_case2}% in (\ref{eq:case2}).
    \end{center}
    \vspace{-6mm}
\end{figure}
\begin{figure}[!t]
    \begin{center}
    \includegraphics[width=0.4\textwidth, height=48mm]{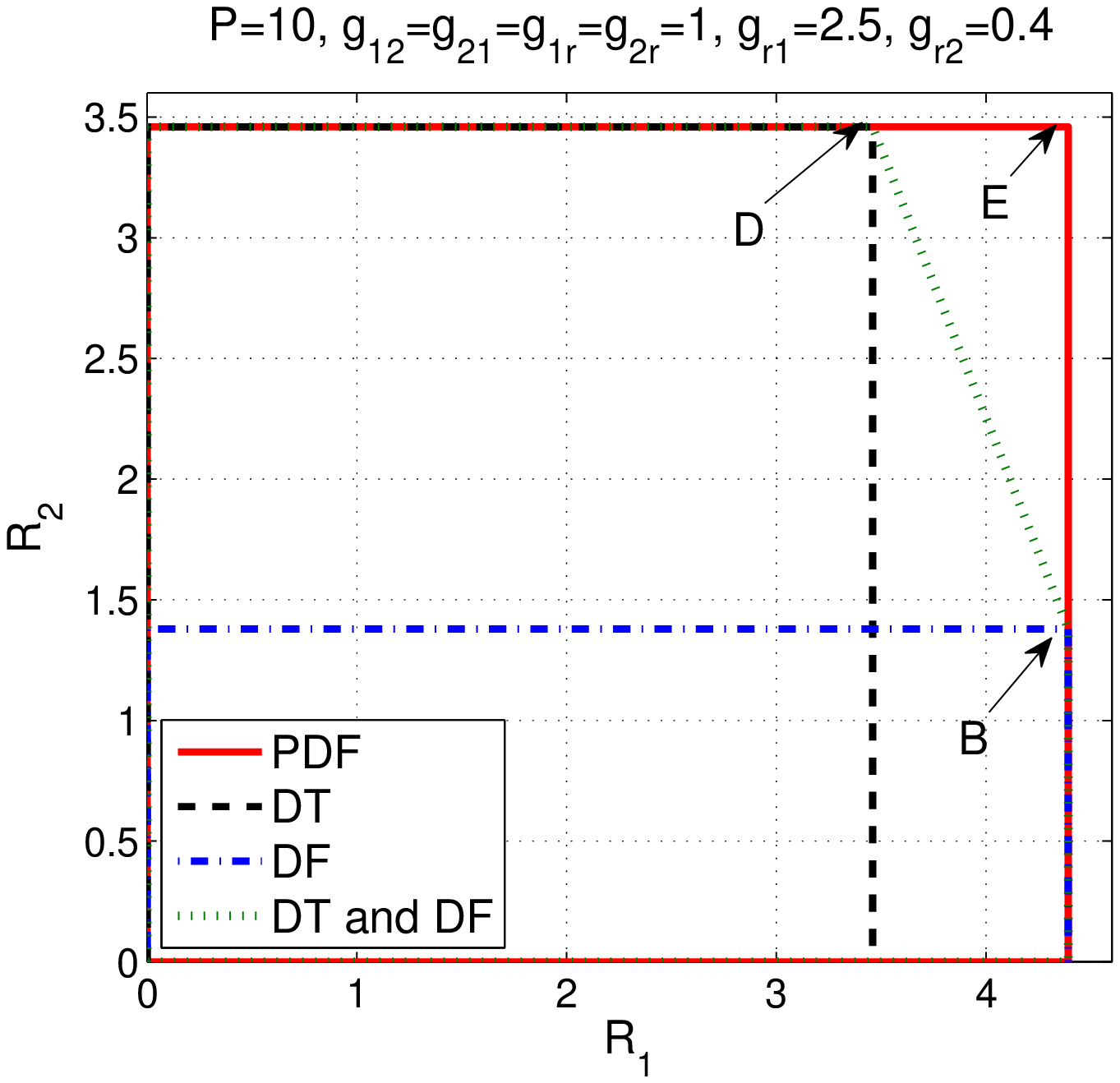}%{fig_jou_Gaus2sc.eps}
    \caption{Case $2$.}\label{fig:app_case1}% in (\ref{eq:case2}).
    \end{center}
    \vspace{-6mm}
\end{figure}
%%%%%%%%%%%%%%%%%%%%%%%%%%%%%%%%%%%%%%%%%%%%%%%%%%%%%%%%%%%%%%%%%%%%%%%%%%%%%%%%%%%%%%%%55
%\begin{figure}[!t]
%     \begin{minipage}[b]{0.485\linewidth}
%    \includegraphics[width=0.95\textwidth, height=38mm]{FIGC1F1.eps}%{fig_jou_Gaus2sc.eps}
%    \caption{Subcase $1.3$.}\label{fig:app_case2}% in (\ref{eq:case2}).
%    \end{minipage}
%    \vspace*{-2mm}
%    \hfill
%%
%\begin{minipage}[b]{0.485\linewidth}
%    \includegraphics[width=0.95\textwidth, height=38mm]{FIGC1F22.eps}%{fig_jou_Gaus2sc.eps}
%    \caption{Case $2$.}\label{fig:app_case1}% in (\ref{eq:case2}).
%    \end{minipage}
%    %\end{center}
%    \vspace*{-2mm}
%\end{figure}
%%%%%%%%%%%%%%%%%%%%%%%%%%%%%%%%%%
We only prove here the conditions in (\ref{case2:b}) as (\ref{case2:e}) can be shown similarly.
Figure \ref{mcf} shows a typical achievable rate regions of the independent
pDF, DF and DT schemes for channel parameters that satisfy the conditions
in (\ref{case2:b}) but with $R_{S1}>R_{TS,1}$.

First, from the conditions $g_{12}^2>g_{r2}^2,$ $g_{r1}^2>g_{21}^2$ and
rate constraints of the DT and DF schemes, we can obtain the coordinates
for points $A$ and in $B$ in Figure \ref{mcf}.
%\begin{align}
%(R_1(A),R_2(A))=\left(C(g_{21}^2P),C(g_{12}^2P)\right),\quad (R_1(B),R_2(B))=\left(C(g_{r1}^2P),C(\frac{g_{r2}^2P}{1+g_{r1}^2P})\right)
%\end{align}
Then we consider the convex hull of the DF and DT schemes, which is obtained
by connecting points $A$ and $B$ with a line that can be formulated as
\begin{align}\label{mqm3}
R_2+\mu_1 R_1&=R_{TS,1},\nonumber\\
\;\;\text{where}\;\mu_1&\;\text{and}\;R_{TS,1}\;\text{are given in (\ref{ffft})}.
\end{align}
%%%%%%%%%%%%%%%%%%%%
%%%%%%%%%%%%%%%%%%%%%%%%%%%%%%%%%%%%%%%%%%%%%5

Next, we consider the pDF scheme. It is easy to show that
$J_1<J_4$ since $g_{r1}^2>g_{21}^2$ and $g_{r1}^2<g_{21}^2+g_{2r}^2\frac{P_r}{P_1}$ while
$J_2<J_5$ since $g_{12}^2>g_{r2}^2$. Therefore, the rate region in
(\ref{eq:gaussian_par})  reduces to %is reduced to the following:
\begin{align}\label{independent pDF}
    R_1 &\leq J_1,\;\;
    R_2 \leq J_2,\;\;
    R_1+R_2 \leq J_3.
  \end{align}
We then maximize $R_2+\mu_1 R_1$ in (\ref{mqm3}) for the pDF scheme,
subject to rate constraints in \eqref{independent pDF},
and compare the results with $R_{TS,1}$.

Based the channel conditions for this subcase and by considering points $A$ and $D$ in Figure \ref{mcf}, we have $R_1(A)>R_1(D)$ and
$R_2(A)>R_2(D)$. Hence, the slope of the time sharing line (between $A$ and $B$) is greater than the slope  of the full DF scheme (between $D$ and $B$),
leading to $\mu>1$.
%Since the slop of the line connecting $D$ and $B$ is $1$, then the slop of the line connecting $A$ and $B$ is greater than $1$ $(\mu>1)$.
Furthermore,
the achievable rate for the pDF scheme in
(\ref{independent pDF}) is a pentagon for any fixed power allocation where the line connecting the upper and lower corner points has a slope of $1$.
Therefore, we only need to maximize the lower corner point (point $C$)
of this pentagon, i.e.
$(R_1(C),R_2(C))=(J_1,J_3-J_1)$. If this point is inside the time shared
region, the upper corner point will also be inside the time shared region.
% as
%well. However, if we maximize the upper corner point and found that it is inside the time-shared region, we can still have the lower
%corner point either inside or outside the time shared region.
%%%%%%%%%%%%%%%%%%%%%%%%%%%%%%%%%%%%%
%In (\ref{mqm3}),
%$\mu_1>1$ since point $A$ in Figure \ref{mcf} is above
%the upper corner point of the DF scheme. Therefore, we consider the lower corner point
%of this pentagon, i.e.
%$(R_1(C),R_2(C))=(J_1,J_3-J_1)$. Then, if this point is inside the time shared
%region, the upper corner point will  be inside the time shared region as
%well.

By substituting $\beta_1=P_1-\gamma_1$ and $\beta_2=P_2-\gamma_2$ into the
lower corner point $(R_1(C),R_2(C))=(J_1,J_3-J_1)$,
we obtain the following problem to maximize $R_2+\mu_1 R_1$:
%%%%%%%%%%%%%%%%%%%%%%%%%%%%%%%%%%%%%%
\begin{align}\label{maxdt}
\max_{\substack{\gamma_1, \gamma_2}} f(\gamma_1,\gamma_2)&,\quad \text{where}\\
 f(\gamma_1,\gamma_2)&=J_3-J_1+\mu_1 J_1,\nonumber\\
\Rightarrow f(\gamma_1,\gamma_2)&= C\left(g_{r1}^2P_1+g_{r2}^2P_2\right)-C\left(\frac{g_{r1}^2P_1}{1+g_{12}^2\gamma_2}\right)\nonumber\\
&\;-\mu_1 \left(C\left(\frac{g_{r1}^2{\beta_1}}{g_{r1}^2\gamma_1+g_{r2}^2\gamma_2+1}\right)+C(g_{21}^2\gamma_1)\right),\nonumber
\end{align}
and $J_1$ and $J_3$ are as given in (\ref{eq:gaussian_par}). We decompose
this problem into $2$ steps where we find the optimal $\gamma_1^{\ast}$ for a
fixed $\gamma_2$ and then find the optimal $\gamma_2^{\ast}$.
We find that $f(\gamma_1,\gamma_2)$ is either monotonically increasing or decreasing
in $\gamma_1$ depending on the value of $\gamma_2$ since
$\frac{\partial f(\gamma_1,\gamma_2)}{\partial \gamma_1}\gtrless 0\;\Leftrightarrow\;
\gamma_2\gtrless \frac{g_{r1}^2-g_{21}^2}{g_{21}^2g_{r2}^2}.$
%%%%%%%%%%%%%%%%%%%%%%%
Hence, for any value of $\gamma_2,$ the optimal $\gamma_1^{\ast}$ is either
$0$ or $P$. By setting $\gamma_1^{\ast}=P$, we find that
$\gamma_2^{\ast}=P$ since $\frac{\partial f(\gamma_1,\gamma_2)}
{\partial \gamma_2}>0$ under condition $g_{12}^2>g_{r2}^2$. At
$\gamma_1^{\ast}=\gamma_2^{\ast}=P$, we obtain the point $A$ in Figure
(\ref{mcf}) which is the same for both the pDF scheme and the DF-DT time shared
scheme. Hence, at this point, $R_2+\mu_1 R_1=R_{TS,1}$. On the other hand,
when $\gamma_1^{\ast}=0$, we obtain $\gamma_2^{\ast}$ in (\ref{ffft}) by setting $\frac{\partial f(\gamma_1,\gamma_2)}
{\partial \gamma_2}=0$. If $\gamma_2^{\ast}$ in  (\ref{ffft}) is negative, then $\frac{\partial f(\gamma_1,\gamma_2)}
{\partial \gamma_2}<0$ for $\gamma_2\in[0,P]$ and $f(0,\gamma_2)$ is a decreasing function of $\gamma_2$ which makes $\gamma_2^{\ast}=0$. Therefore, $\gamma_2^{\ast}$ is given as in (\ref{ffft}). After that, we have $R_2+\mu_1 R_1=f(0,\gamma_2^{\ast})$. Hence, to
compare with the time-shared scheme, we only need to compare the point at
$(\gamma_1,\gamma_2)=(0,\gamma_2^{\ast})$.
%%%%%%%%%%%%%%%5

Specifically, we compare $f(0,\gamma_2^{\ast})$ with $R_{TS,1}$. If
$f(0,\gamma_2^{\ast})>R_{TS,1}$, then pDF achieves a larger rate region than
the DF-DT time shared region, which corresponds to (\ref{case12:c}) where
$f(0,\gamma_2^{\ast})=R_{S,1}$ in (\ref{ffft}). If $f(0,\gamma_2^{\ast})<R_{TS,1}$, then pDF
achieves the same region  as the DF-DT time shared region (condition
(\ref{case2:b})). Last, the curve connecting the points $A$ and $B$ is
obtained by varying $\gamma_2\in[0,P]$ while keeping
$\gamma_1^\star=0$. Hence the optimal scheme always have user 1 performing
full DF, while user 2 performing partial DF.
%%%%%%%%%%%%%%%%%%%%%%%%%%%%%%%%%%%%%
%%%%%%%%%%%%%%%%%%%%%%%%%%%%%%%%%%%%%%%%%%%%%%%%%%%%%%%%%%%%%%%%%%%%%%%%%%%
%%%%%%%%%%%%%%%%%%%%%%%%%%%%%%%%%%%%%%%%%%%%%%%%%%%%%%%%%%%%%%%%%%%%%%%%%%%%%%%%%%%%%%5
\subsubsection{Subcase $1.2$: $g_{21}^2(1+g_{r2}^2P_2)<g_{r1}^2<g_{21}^2+g_{2r}^2\frac{P_2}{P_1}$}
Figure \ref{fig:app_case0} illustrates a typical  example for this subcase. It can be proved by following a similar procedure to case $1.1$ except that we need to consider the time shared region of  the DF  scheme and the hybrid scheme (user 1: DF, user $2$: DT). The convex hull of this time shared region is
obtained by connecting points C and B with a line as in case $1.1,$ where the coordinates of point B is
$(R_1(B),R_2(B))=(C(g_{r1}^2P_1),C(\frac{g_{r2}^2P_2}{1+g_{r1}^2P_1}))$  and for point C is
$(R_1(C),R_2(C))=(C(\frac{g_{r1}^2P_1}{1+g_{r2}^2P_2}),C(g_{12}^2P_2))$.
%
%%%%%%%%%
\subsubsection{Subcase $1.3$: $g_{21}^2+g_{2r}^2\frac{P_r}{P_1}<g_{r1}^2<g_{21}^2(1+g_{r2}^2P_2)$}
%
%
%Figures \ref{fig:app_case2} and \ref{fig:app_case1}   illustrate examples for this case.
Figure \ref{fig:app_case2}   illustrates a typical example for this case.
We show here that pDF can achieve point $E$ which is outside the DF-DT
time-shared region.
Staring with the full DF scheme $(\gamma_1=\gamma_2=0)$, the
rate region in (\ref{eq:gaussian_par}) becomes
\begin{align}
R_1&\overset{(a)}{\leq} C(g_{21}^2P_1+g_{2r}^2P_r),\;\;
R_2\overset{(b)}{\leq} C(g_{r2}^2P_2),\nonumber\\
R_1+R_2 &\leq C(g_{r1}^2P_1+g_{r2}^2P_2),
\end{align}
where $(a)$ follows since $g_{r1}^2>g_{21}^2+g_{2r}^2\frac{P_r}{P_1}$ while $(b)$ follows
since $g_{12}^2>g_{r2}^2$. Then, the coordinates of point $B$ in Figure
\ref{fig:app_case2} are $R_1(B)=C(g_{21}^2P_1+g_{2r}^2P_r)$ and
$R_2(B)=C(g_{r1}^2P_1+g_{r2}^2P_2)-C(g_{21}^2P_1+g_{2r}^2P_r)$.

Now consider the pDF scheme and its achievable region in
(\ref{eq:gaussian_par}). At $R_2=0$ $(\gamma_2=\beta_2=0)$,
$R_1^{\max}=C(g_{21}^2P_1+g_{2r}^2P_r)$ since $J_1 (\gamma_1)$ in
(\ref{eq:gaussian_par}) is a decreasing function of $\gamma_1$ and $J_1
(0)=C(g_{r1}^2P_1)>C(g_{21}^2P_1+g_{2r}^2P_r)$. Therefore, $R_1$ is the same for
points $B$ and $E$ $(R_1(E)=R_1(B))$.

Next, we need to find $R_2^{\max}$ such that
$R_1=C(g_{21}^2P_1+g_{2r}^2P_r)$. More specifically, we need to find
$\gamma_2^{\max}$ such that $J_1\geq J_4$. Consider $\gamma_1=0$, we have
\begin{align}%\label{cscg}
&J_1\geq J_4\Leftrightarrow
\log\left(1+\frac{g_{r1}^2P_1}{1+g_{r2}^2\gamma_2}\right)\geq
\log(1+g_{21}^2P_1+g_{2r}^2P_r)\nonumber\\
&\Leftrightarrow \gamma_2\leq
\frac{1}{g_{r2}^2}\left(\frac{g_{r1}^2}{g_{21}^2+g_{2r}^2\frac{P_r}{P_1}}-1\right)=\gamma_2^{\max}.\nonumber
\end{align}
With this value of $\gamma_2$,  $J_1(\gamma_1)$ is a decreasing function of
$\gamma_1$ since $\frac{\partial J_1}{\partial \gamma_1}<0$. Hence, setting
$\gamma_1=0$ is in fact optimal. Note that since $\gamma_2\leq P_2$, $\gamma_2^{\max}$ is given as follows:
\begin{align}\label{fmm}
&\gamma_2^{\max}=\\
&\left\{\begin{array}{cl}
 \frac{1}{g_{r2}^2}\left(\frac{g_{r1}^2}{g_{21}^2+g_{2r}^2\frac{P_r}{P_1}}-1\right)& \text{if}\;\; g_{r1}^2< \left(g_{21}^2+g_{2r}^2\frac{P_r}{P_1}\right)(1+g_{r2}^2P_2) \\
	P_2, &\text{if}\;\;g_{r1}^2\geq \left(g_{21}^2+g_{2r}^2\frac{P_r}{P_1}\right)(1+g_{r2}^2P_2)
\end{array}\right.\nonumber
\end{align}
However, since $g_{r1}^2<g_{21}^2(1+g_{r2}^2P_2)$ for this case, we have $\gamma_2^{\max}=\frac{1}{g_{r2}^2}\left(\frac{g_{r1}^2}{g_{21}^2+g_{2r}^2\frac{P_r}{P_1}}-1\right)$.
After that, define $J_i(\gamma_1,\gamma_2)$ for $i\in \{1:5\}$ as $J_i$ in
(\ref{eq:gaussian_par}) obtained with $\gamma_1$ and $\gamma_2$. Then, with
$\gamma_1=0$ and $\gamma_2=\gamma_2^{\max}$, the achievable rate region in
(\ref{eq:gaussian_par}) is given as
\begin{align}\label{rm3}
R_1&\leq C(g_{21}^2P_1+g_{2r}^2P_r),\;\;R_2\leq J_2(0,\gamma_2^{\max}),\nonumber\\
R_1+R_2&\leq J_3(0,\gamma_2^{\max}).
\end{align}
Then, $J_3(0,\gamma_2^{\max})<C(g_{21}^2P_1+g_{2r}^2P_r)+J_2(0,\gamma_2^{\max})$ since
\begin{align}
&J_3(0,\gamma_2^{\max})-J_2(0,\gamma_2^{\max})\nonumber\\
&        =C\left(\frac{g_{r1}^2P_1}{g_{r2}^2P_2+1}\right)%\nonumber\\
      \overset{(a)}{<}C(g_{21}^2P_1+g_{2r}^2P_r).\nonumber
\end{align}
where $(a)$ follows from (\ref{fmm}). Hence, in Figure \ref{fig:app_case2},
$R_2(E)=J_3(0,\gamma_2^{\max})-C(g_{21}^2P_1+g_{2r}^2P_r)>R_2(B)$ since
$R_2(E)-R_2(B)=C(g_{12}^2\gamma_2^{\max})-C(g_{r2}^2\gamma_2^{\max})>0.$
Then, point $E$ is outside the DF-DT time-shared region. To show switching between DF and DT is optimal for user $1$,
follow the same proof in subcase $1.1$.
%%%%%%%%%%%%%%%%%%%%%%%%%%%%%%%%%%%%%%%%%%%%%%%%%%%%%%%%%
%%%%%%%%%%%%%%%%%%%%%%%%%%%%%%%%%%%%%%%%%%%%%%%%%%%%%%%%\frac{P_r}{P_1}%%%%%%%%%%%
\subsubsection{Subcase $1.4$: $\max\left\{g_{21}^2+g_{2r}^2\frac{P_r}{P_1},\;g_{21}^2(1+g_{r2}^2P_2)\right\}<g_{r1}^2<\left(g_{21}^2+g_{2r}^2\frac{P_r}{P_1}\right)(1+g_{r2}^2P_2)$}
This case includes both subcases $1.2$ and $1.3$ and has a similar proof to these cases.
 %by showing that points $E$ and $C$ are outside the
%DF-DT time shared region.
%%%%%%%%%%%%%%%%%%%%%%%%%%%%%%%%%%%%%%%%%%%%%%%%%%%%%%%%%%%%%
%%%%%%%%%%%%%%%%%%%%%%%%%%%%%%%%%%%%%%%%%%%%%%%%%%%%%%%%555
%%%%%%%%%%%%%%%%%%%%%%%%%%%%%%%%%%%%%%%%%%%%%%%%%%%%%%%%555
%%%%%%%%%%%%%%%%%%%%%%%%%%%%%%%%%%%%%%%%%%%%%%%%%%%%%%%%%%%%%%%%%%%%%%%5
\vspace*{-3mm}
\subsection{Case $2$: conditions (\ref{case12})}
This case can be simply proved by considering $\gamma_2^{\max}$ in (\ref{fmm}) for subcase $1.3$.
If $g_{r1}^2\geq \left(g_{21}^2+g_{2r}^2\frac{P_r}{P_1}\right)(1+g_{r2}^2P_2)$, then $\gamma_2=P_2$ and $\gamma_1=0$ which locates point $E$ in the upper corner as shown in Figure \ref{fig:app_case1}. Then,  point $E$ is also outside the DF-DT time-shared region.
Point $E$ and $C$ collide and they are achieved by user $1$ doing full DF and user $2$ doing DT. Since the rate region for this case is a rectangle as shown
in Figure \ref{fig:app_case1}, DF from user $1$ and DT from user $2$ are optimal to achieve the full rate region.
%%%%%%%%%%%%%%%%%%%%%%%%%%%%%%%%%%%%%%%%%%%%5
%%%%%%%%%%%%%%%%%%%%%%%%%%%%%%%%%%%%%%%%%%%%%%%%%%%%%%%%%%%%%%%%%%%%
%%%%%%%%%%%%%%%%%%%%%%%%%%%%%%%%%%%%%%%%%%%%%%%%%%%%%%%%%%%%%%%%%%%%%5555
%
%This case has $4$ conditions as shown in (\ref{case12}) and (\ref{case2}).
%%%%%%%%%%%%%%%%%%%%%%%%%%%%%%%%%%%%%%%%%%%%%%%%%%%%%%%%%%%%%%%%%%%%
\vspace*{-3mm}
\subsection{Case $3$: conditions (\ref{case5})}
The conditions for this case are given in (\ref{case5}). Note that the
third condition is equivalent to  $C(g_{r1}^2P_1+g_{r2}^2P_2) < C(g_{21}^2P_1)+C(g_{12}^2P_2)$.
Considering the rate region in (\ref{eq:gaussian_par}), it easy to show that setting
$\gamma_1^{\ast}=\gamma_2^{\ast}=0$ maximizes the individual rates of this
region while setting $\gamma_1^{\ast}=P_1$ and $\gamma_2^{\ast}=P_2$ maximizes the sum
rate of this region. By taking the convex closure of the two regions
resulting from these two settings, we obtain the time shared region of the
DF and DT schemes.
%%%%%%%%%%%%%%%%%%%%%%%%%%%%%%%%%%%%%%%%%%%%%%%%%%%%%%%%%%%%%%%%%%%%%%%%%%%%%%%%%%%
\vspace*{-3mm}
\subsection{Case 4 and 5: conditions (\ref{case3}) and (\ref{case4})}
Considering rate region (\ref{eq:gaussian_par}) with the conditions
in (\ref{case3}) (reps. (\ref{case4})), we maximize the rate region by setting
$\gamma_1^{\ast}=\gamma_2^{\ast}=0$ (resp. $\gamma_1^{\ast}=P_1,\;\gamma_2^{\ast}=P_2$). Hence, the pDF scheme reduces to DF (resp. DT)
scheme.
%%%%%%%%%%%%%%%%%%%%%%%%%%%%%%%%%%%%%%%%%%%%%%%%%%%%%%5
%----------------------------------------------------------------------
%
%
%
%----------------------------------------------------------------------
%\setcounter{subsubsection}{0}
%\section*{Acknowledgment}
\vspace*{-4mm}
\bibliographystyle{IEEEtran}
\bibliography{reflist}

\end{document}